\documentclass[12pt]{article}
\pagestyle{plain} \baselineskip 0.22in \textwidth 16.80cm
\textheight 22.0cm \topmargin -0.50cm \oddsidemargin -0.3cm
\evensidemargin -0.3cm
\parskip 0.2cm
\baselineskip 0.6cm \makeatletter

\usepackage{amsthm}
\usepackage{bbm}
\usepackage{graphicx}
\usepackage{subfigure}
\graphicspath{{figs/}}
\usepackage{float}
\usepackage{epsfig}
\usepackage{paralist}
\usepackage{epstopdf}
\usepackage[colorlinks=true, citecolor=blue]{hyperref}
\usepackage{indentfirst}
\setlength{\parindent}{2em}
\usepackage{epsfig,graphicx,amssymb,amsmath}

\usepackage{caption}
\makeatletter
\newif\if@restonecol
\makeatother

\usepackage[linesnumbered,ruled,vlined]{algorithm2e}
\SetKwRepeat{Loop}{do}{}

\allowdisplaybreaks[4]
\newtheorem{thm}{\textbf{Theorem}}

\newtheorem{definition}{\textbf{Definition}}
\newtheorem{lem}{\textbf{Lemma}}
\newtheorem{proposition}{\textbf{Proposition}}
\newtheorem{assumption}{\textbf{Assumption}}
\newtheorem{remark}{\textbf{Remark}}
\newtheorem{example}{\textbf{Example}}
\usepackage{tikz}
\usetikzlibrary{arrows,automata}
\usepackage{verbatim}
\usepackage{tikz}
\usepackage{abstract}

\usepackage{multicol}
\usepackage{multirow}
\usepackage{natbib}

\usepackage{array}

\begin{document}

\title{Zero-Sum Semi-Markov Games with State-Action-Dependent Discount Factors}
\author{{Zhihui Yu$^1$}, \ {Xianping Guo$^1$}\thanks{email: mcsgxp@mail.sysu.edu.cn}, \ {Li Xia$^2$}\thanks{email: xiali5@sysu.edu.cn; xial@tsinghua.edu.cn}\\
\small
    {$^1$School of Mathematics, Sun Yat-Sen University, Guangzhou, China}\\
\small \vspace{-1cm}
    {$^2$Business School, Sun Yat-Sen University, Guangzhou, China} }
\date{}
\maketitle
\begin{abstract}
    Semi-Markov model is one of the most general models for stochastic dynamic systems.
    This paper deals with a two-person zero-sum game for semi-Markov
    processes. We focus on the expected discounted payoff criterion with
    state-action-dependent discount factors. The state and action spaces are
    both Polish spaces, and the payoff function is $\omega$-bounded. We
    first construct a fairly general model of semi-Markov games
    under a given semi-Markov kernel and a pair of strategies. Next,
    based on the standard regularity condition and the
    continuity-compactness condition for semi-Markov games, we
    derive a ``drift condition" on the semi-Markov kernel and suppose
    that the discount factors have a positive lower bound, under which
    the existence of the value function  and a pair of optimal
    stationary strategies of our semi-Markov game are proved by using
    the Shapley equation. Moreover, when the state and action spaces are
    both finite, a value iteration-type algorithm for computing the
    value function and $\varepsilon$-Nash equilibrium of the game is
    developed. The convergence of the algorithm is also proved. Finally,
    we conduct numerical examples to demonstrate our main results.
\end{abstract}
\textbf{Keywords:} Semi-Markov game, state-action-dependent discount
factor, value  iteration-type algorithm, $\varepsilon$-Nash
equilibrium.

\section{Introduction}\label{sec1}
Game theory is a fundamental mathematical model to study strategic
interactions among rational decision-makers. It has wide
applications in many fields, such as social science, computer
science, management science, and economic systems. In the early
period of game theory, it focuses on \emph{matrix games} with two
persons and zero sum, where each participant's gains or losses are
exactly balanced by those of the other. When the system state
evolves over time, matrix games are transformed into
\emph{two-person zero-sum stochastic games}.

The study of zero-sum stochastic games is initiated by
\cite{Shapley1953}, and many extensions of that work have been
investigated in the literature. As is well known, it can be roughly
classified into the following four main groups. The first group is
\emph{discrete-time Markov games}
\citep{Hernandez-Lerma2000,Kuenle2003,Sennott1994}, which can be
considered as an extension of discrete-time Markov control
processes, that is, the decision epoch is the fixed discrete-time
point and the state-action process is discrete. The second group
deals with \emph{stochastic differential games}
\citep{Basar1999,Borkar1992,Kushner2004,Ramachandran1999}, where the
evolution of state variables is governed by stochastic differential
equations. The third group deals with \emph{continuous-time Markov
games} \citep{Guo2003,Guo2005,Guo2007,Neyman2017} in which the
sojourn times between consecutive decision epochs are exponentially
distributed and the players can select their actions continuously in
time. The fourth group is \emph{semi-Markov games} (SMGs)
\citep{Jaskiewicz2002,Lal1992,Luque-Vasquez2002,Minjarez2008,Mondal2016},
where the state process is continuous over time, the sojourn time
between two consecutive decision epochs follows any distribution and
players take actions just at the moment when the state changes.

In certain sense, we may argue that semi-Markov processes can model
almost every possible stochastic dynamic system, since the sojourn
time can be any distribution and the Markovian property can be
satisfied by state augment. Therefore, it is important to study
semi-Markov games which can be used to formulate wide varieties of
decision-making problems in social and economic systems. In this
paper, we focus on the study of two-person zero-sum SMGs.
\cite{Lal1992} deal with two-person zero-sum SMGs under both
expected discounted and long-run average payoff criterion, where the
state space is denumerable and the payoff function is bounded. For
the discounted case, they prove the existence of the value function
and a pair of optimal stationary strategies by using the
\emph{Shapley equation}. For the long-run average case, they further
establish the optimality equation and propose a standard ergodic
condition under which the existence of the value function and a pair
of optimal stationary strategies are ensured through solving the
optimality equation in a unified manner. \cite{Jaskiewicz2002}
studies the two-person zero-sum SMGs under the long-run average
payoff criterion with a more general model, where the state and
action spaces are both Borel spaces and the payoff function is
$\omega $-bounded. This paper derives some generalized geometric
ergodicity conditions on the transition probabilities under which
the optimality equation has a solution which can be obtained by
solving some $\varepsilon$-perturbed SGMs. This paper also proves
the existence of the value function and a pair of optimal stationary
strategies of the SMGs. There is further literature work on
two-person zero-sum SMGs that extends the similar results to the
expected discounted payoff criterion. \cite{Luque-Vasquez2002}
considers the $n$-stage SMGs as well as the infinite horizon case
with Borel state and action spaces and $\omega $-bounded payoff
function. The existence of the value function and a pair of optimal
stationary strategies are also shown under suitable assumptions on
the transition law. Moreover, \cite{Minjarez2008} study the
discounted zero-sum SMGs with unknown holding time distribution $H$
for one player. They propose a state-action independent condition on
$H$ to get independent observations during the evolution of the
system, under which they combine suitable methods of statistical
estimation of $H$ with control procedures to construct an
asymptotically discount optimal pair of strategies.

Most of the literature work on game theory focuses on the existence
of \emph{Nash equilibrium}. However, how to efficiently solve a
stochastic dynamic game and compute a pair of optimal stationary
strategies are especially important for practical implementation of
game theory. The classic algorithmic study on game theory focuses on
static games, where the matrix game and the bimatrix game can be
solved by linear programming and quadratic programming, respectively
\citep{Barron2013}. Recently, there are emerging investigations that
aim to study the efficient computation for stochastic dynamic games
using approximation or learning algorithms. \cite{Littman1994}
proposes a minimax-Q algorithm to solve discrete-time two-person
zero-sum Markov games, which is essentially motivated by the
standard Q-learning algorithm with a minimax operator in Markov
games replacing the max operator in reinforcement learning.
\cite{Al-Tamimi2007} utilize the method of Q-learning and
approximate dynamic programming (ADP) to solve a discrete-time
linear system quadratic zero-sum game, and the proof of the
convergence of the algorithm is also given. \cite{Vamvoudakis2012}
deal with a continuous-time two-person zero-sum game with infinite
horizon cost for nonlinear systems with known dynamics. They propose
a ``synchronous" zero-sum game policy iteration algorithm to solve
the game through learning the Hamilton-Jacobi-Isaacs (HJI) equation
in real time. Moreover, a persistence of excitation condition is
given under which the convergence to the optimal saddle point and
the stability of the system are also guaranteed. \cite{Mondal2016}
study the AR-AT (Additive Reward-Additive Transition) two-person
zero-sum SMGs, where the state and action spaces are both finite.
They prove that such game can be formulated as a vertical linear
complementarity problem (VLCP), which can be solved by the
Cottle-Dantzig's algorithm.

All the above literature work on SMGs assumes that the discount
factor is a constant, which may not always hold. For example,
considering the application in economics, the discount factor
(interest rate) may depend both on economy environments and
decision-makers' actions. That is, the interest rate usually varies
in different financial markets and monetary policies, where
financial markets can be considered as states and monetary policies
are actions taken by the government. Thus, it is necessary and
reasonable to study the SMGs with \emph{state-action-dependent
discount factors}. Problems with non-constant discount factors have
been studied for Markov decision processes (MDP)
\citep{Minjarez2015,Ye2012} and two-person zero-sum discrete-time
Markov games \citep{Gonzalez2019}. In this paper, we aim at studying
the two-person zero-sum SMGs with expected discounted payoff
criterion in which the discount factors are state-action-dependent.
The objective is to find a pair of optimal strategies to maximize
the payoff of player $1$ (P1) and minimize the payoff of player $2$
(P2). More precisely, we deal with the SMGs specified by five
primitive data: the state space $X$; the action spaces $A,B$ for P1
and P2, respectively; the semi-Markov kernel $Q(t,y|x,a,b)$; the
discount factor $\alpha(x,a,b)$; and the payoff function $r(x,a,b)$.
The state space $X$ and action spaces $A,B$ are all Polish spaces,
and the payoff function $r(x,a,b)$ is \emph{$\omega $-bounded}. With
these data, we construct an SMG model with a fairly general problem
setting. Then we impose suitable conditions on the model parameters
shown in Assumptions~\ref{ass1}-\ref{ass4}, under which we establish
the Shapley equation and prove the existence of the value function
and a pair of optimal stationary strategies of the game. Our proof
is quite different from \cite{Gonzalez2019} since we directly search
for Nash equilibrium in history-dependent strategies instead of
turning to Markov strategies. In addition, when the state and action
spaces are both finite, we derive a \emph{value iteration-type
algorithm} to approach to the value function and Nash equilibrium of
the game based on the Shapley equation. The convergence of the
algorithm is also proved. Finally, we conduct numerical examples on
investment problem to demonstrate the main results of our paper.

The contributions of this paper can be summarized as follows. (1) We
construct the two-person zero-sum SMG model with expected discounted payoff
criterion in which the discount factors are state-action-dependent. To the
best of our knowledge, our work is the first one that the discount
factor is regarded as a variable in stochastic semi-Markov games,
which could complement the theoretical study on SMGs. (2) We derive
a ``drift condition" (see Assumption~\ref{ass3}) on the semi-Markov
kernel, which is more general than the counterpart in the literature
work \citep{Luque-Vasquez2002}, as stated in Remark~\ref{re4}. (3)
We propose a value iteration-type algorithm to compute the value
function and $\varepsilon$-Nash equilibrium of the SMG. This
algorithm can be viewed as a combination of the value iteration of
MDP and the linear programming of
matrix games. Moreover, the convergence and the error-bound of the
algorithm are also guaranteed.

The rest of this paper is organized as follows. In
Section~\ref{sec2}, we introduce the model of SMG as well as the
optimality criterion. In Section~\ref{sec3}, we impose suitable
conditions on the model parameters under which the existence of the
value function and a pair of optimal stationary strategies are
proved by using the Shapley equation. A value iteration-type
algorithm for computing the $\varepsilon$-Nash equilibrium is
developed in Section~\ref{sec4}, and some numerical examples are
conducted to demonstrate our main results in Section~\ref{sec5}.
Finally, we conclude the paper and discuss some future research
topics in Section~\ref{sec6}.

\section{Two-Person Zero-Sum Semi-Markov Game Model}\label{sec2}

\noindent Notation: If $E$ is a Polish space (that is, a complete
and separable metric space), its Borel $\sigma$-algebra is denoted
by $\mathcal{B}(E)$, and  $\mathbb{P}(E)$ denotes the family of
probability measures on $\mathcal{B}(E)$ endowed with the topology
of weak convergence.

In this section, we introduce a two-person zero-sum SMG model with expected
discounted payoff criterion and state-action-dependent discount factors,
which is denoted by the collection
$$\{X, A, B, (A(x),B(x), x\in X),  Q(t,y|x,a,b), \alpha(x,a,b), r_{1}(x,a,b), r_{2}(x,a,b)\},$$
where the symbols are explained as follows.

\noindent$\bullet$ $X$ is the state space which is a Polish space,
and $A$ and $B$ are action spaces for P1 and P2, respectively, which
are also supposed to be Polish spaces.

\noindent$\bullet$ $A(x)$ and $B(x)$ are Borel subsets of $A$ and
$B$, which represent the sets of the admissible actions for P1 and
P2 at state $x\in X$, respectively. Let
\begin{equation*}
K := \{(x,a,b)|x\in X, a\in A(x), b\in B(x)\}
\end{equation*}
be a measurable Borel subset of $X \times A\times B$.

\noindent$\bullet$ $Q(t,y|x,a,b)$ is a semi-Markov kernel which
satisfies the following properties.

\noindent (a) For each fixed  $(x,a,b)\in K$, $Q(\cdot,\cdot|x,a,b)$
is a probability measure on $[0,+\infty) \times X$, whereas for each
fixed  $t \in [0,+\infty), D\in\mathcal{B}(X)$,
$Q(t,D|\cdot,\cdot,\cdot)$ is a real-valued Borel function on $K$.

\noindent (b) For each fixed $(x,a,b)\in K$ and
$D\in\mathcal{B}(X)$, $Q(\cdot,D|x,a,b)$ is a non-decreasing
right-continuous real-valued Borel function on $[0,+\infty)$ such
that $Q(0,D|x,a,b)=0$.

\noindent (c) For each fixed  $(x,a,b)\in K$,
\begin{equation*}
H(\cdot|x,a,b) := Q(\cdot,X|x,a,b)
\end{equation*}
denotes the distribution function of the sojourn time at state $x\in X$
when the actions $a\in A(x), b\in B(x)$ are chosen. For each  $x \in
X$ and $D\in\mathcal{B}(X)$, when P1 and P2 select actions $a \in
A(x)$ and $b \in B(x)$, respectively, $Q(t,D|x,a,b)$ denotes the
joint probability that the sojourn time in state $x$ is not greater
than $t \in R_{+}$ and the next state belongs to D.

\noindent$\bullet$ $\alpha(x,a,b)$ is a measurable function from $K$ to
$(0,+\infty)$ which denotes the state-action-dependent discount factor.

\noindent$\bullet$ $r_{1}(x,a,b)$ and $r_{2}(x,a,b)$ are two
real-valued functions on $K$, which represent the payoff function
for P1 and P2, respectively.

If $r_{1}(x,a,b)+r_{2}(x,a,b)=0$ for all $(x,a,b)\in K$, then the
model is called a two-person zero-sum SMG. Otherwise, the game is
nonzero-sum. In this paper, we focus on the zero-sum case. We denote
$r:=r_{1}=-r_{2}$, and regard P1 as the maximizer and P2 as the
minimizer. The evolution of SMGs with the expected discounted payoff
criterion carries on as follows.

Assume that the game starts at the initial state $x_{0}\in X$ at the
initial decision epoch $t_{0}:=0$. The two players choose
simultaneously a pure action pair $(a_{0},b_{0})\in A(x_{0})\times
B(x_{0})$ according to the variables $t_{0}$ and $x_{0}$, then P1
and P2 receive immediate rewards
$r_{1}(x_{0},a_{0},b_{0}),r_{2}(x_{0},a_{0},b_{0})$, respectively.
Consequently, after staying at state $x_{0}$ up to time
$t_{1}>t_{0}$, the system moves to a new state $x_{1}\in D$
according to the transition law
$Q(t_{1}-t_{0},D|x_{0},a_{0},b_{0})$. Once the state transition to
$x_{1}$ occurs at the $1$st decision epoch $t_{1}$, the entire
process repeats again and the game evolves in this way.

Thus, we obtain an admissible history at the $n$th decision epoch
\begin{equation}
h_{n}:=(t_{0},x_{0},a_{0},b_{0},t_{1},x_{1},a_{1},b_{1},\dots,
t_{n},x_{n}). \nonumber
\end{equation}
When the game goes to infinity, we obtain the history
\begin{equation}
h:=(t_{0},x_{0},a_{0},b_{0},t_{1},x_{1},a_{1},b_{1},\dots),\nonumber
\end{equation}
where $t_{n} \leq t_{n+1}$, $(x_{n},a_{n},b_{n})\in K$ for all $n
\geq 0$. Moreover, let $H_{n}$ be the class of all admissible
histories $h_{n}$ of the system up to the $n$th decision epoch,
endowed with a Borel $\sigma$-algebra.

To introduce our expected discounted payoff criterion discussed in this
paper, we give the definitions of strategies as follows.
\begin{definition}\label{defn1}
A randomized history-dependent strategy for P1 is a sequence of
stochastic kernels $\pi^{1}:=(\pi_{n}^{1},n\geq 0)$ that satisfies
the following conditions:

\noindent ($\textrm{i}$) for each $D\in \mathcal{B}(X)$,
$\pi_{n}^{1}(D|\cdot)$ is a Borel function on $H_{n}$, and for each
$h_{n}\in H_{n}$, $\pi_{n}^{1}(\cdot|h_{n})$ is a probability
measure on A;

\noindent ($\textrm{ii}$) $\pi_{n}^{1}(\cdot|h_{n})$ is concentrated
on $A(x_{n})$, that is
\begin{equation*}
\pi_{n}^{1}(A(x_{n})|h_{n})=1, ~~~ \forall h_{n}\in H_{n} ~
\mathrm{and}~ n\geq0.
\end{equation*}
\end{definition}
We denote by $\Pi_{1}$ the set of all the randomized
history-dependent strategies for P1 for simplicity.

\begin{definition}\label{defn2}
    (1) A strategy $\pi^{1}=(\pi_{n}^{1},n\geq 0) \in \Pi_{1}$ is called a randomized Markov
    strategy if there exists a sequence of stochastic kernels
    $\phi_{1}=(\varphi _{n},n\geq 0)$ such that
    \begin{equation*}
    \pi_{n}^{1}(\cdot|h_{n})=\varphi_{n}(\cdot|x_{n}),~~~ \forall h_{n}\in H_{n} ~ \mathrm{and}~ n\geq0.
    \end{equation*}

    (2) A randomized Markov strategy $\phi_{1}=(\varphi _{n},n\geq 0)$
    is called stationary if $\varphi _{n}$ is independent of $n$; that
    is, if there exists a stochastic kernel $\varphi$ on $A$ given $x$
    such that
    \begin{equation*}
    \varphi_{n}(\cdot|x)\equiv \varphi (\cdot|x),~~~ \forall x\in X~ \mathrm{and}~ n\geq0.
    \end{equation*}

    (3) Moreover, if $\varphi(\cdot|x)$ is a Dirac measure for all $x\in
    X$, then the stationary strategy
    $\varphi^\infty=(\varphi,\varphi,\varphi,\dots)$ is called a pure
    strategy.
\end{definition}
We denote by $\Pi_{1}^{M}$, $\Phi_{1}$ and $\Pi_{1}^{MD}$ the sets
of all the randomized Markov strategies, randomized stationary strategies and pure
strategies for P1, respectively.

The sets of all randomized history-dependent strategies $\Pi_{2}$,
randomized Markov strategies $\Pi_{2}^{M}$, randomized stationary
strategies $\Phi_{2}$, pure strategies $\Pi_{2}^{MD}$ for P2 are
defined similarly, with $B(x)$ in lieu of $A(x)$. Clearly,
$\Pi_{1}^{MD}\subset \Phi_{1} \subset \Pi_{1}^{M} \subset \Pi_{1} $
and $\Pi_{2}^{MD}\subset \Phi_{2} \subset \Pi_{2}^{M} \subset
\Pi_{2} $.


For each $x\in X,\pi^{1}\in \Pi_{1},\pi^{2}\in \Pi_{2}$,  by the
Tulcea's theorem \citep{Hernandez-Lerma1996}, there exists a unique
probability space $(\Omega, \mathcal{F}, \mathbb{P}_{x}^{\pi^{1},
\pi^{2}})$ and a stochastic process $\{T_{n}, X_{n}, A_{n}, B_{n},$
$ n\geq 0\}$ such that for each $D\in \mathcal{B}(X),D_{1}\in
\mathcal{B}(A),D_{2}\in \mathcal{B}(B)$ and $n\geq 0$, we have
\begin{equation*}
\mathbb{P}_{x}^{\pi^{1}, \pi^{2}}(X_{0}=x)=1,
\end{equation*}
\begin{equation*}
\mathbb{P}_{x}^{\pi^{1}, \pi^{2}}(A_{n}\in D_{1},B_{n}\in D_{2}|h_{n})=\pi_{n}^{1}(D_{1}|h_{n})\pi_{n}^{2}(D_{2}|h_{n}),
\end{equation*}
\begin{equation*}
\mathbb{P}_{x}^{\pi^{1}, \pi^{2}}(T_{n+1}-T_{n}\leq t ,X_{n+1}\in {D} | h_{n}, a_{n}, b_{n})=Q(t,D|x_{n},a_{n},b_{n}).
\end{equation*}
Corresponding to the stochastic process $\{T_{n},X_{n}, A_{n},
B_{n}, n\geq 0\}$ with probability space $(\Omega, \mathcal{F},
\mathbb{P}_{x}^{\pi^{1}, \pi^{2}})$, we define an underlying
continuous-time state-action process $\{X(t), A(t), B(t), t\geq 0\}$
as
\begin{equation*}
X(t)=\sum_{n=0}^{\infty}\mathbb I_{\{T_{n}\leq t<T_{n+1}\}}X_{n}+X^{c}\mathbb I_{\{t\geq T_{\infty}\}},
\end{equation*}
\begin{equation*}
A(t)=\sum_{n=0}^{\infty}\mathbb I_{\{T_{n}\leq t<T_{n+1}\}}A_{n}+A^{c}\mathbb I_{\{t\geq T_{\infty}\}},
\end{equation*}
\begin{equation*}
B(t)=\sum_{n=0}^{\infty}\mathbb I_{\{T_{n}\leq t<T_{n+1}\}}B_{n}+B^{c}\mathbb I_{\{t\geq T_{\infty}\}},
\end{equation*}
where $X^{c} \not\in X$, $A^{c} \not\in A$, $B^{c} \not\in B$ are
some isolated points, $T_{\infty}:=\lim\limits_{n\to+\infty}T_{n}$,
and $\mathbb I_{E}$ is an indicator function on any set $E$.
\begin{definition}\label{defn3}
    The stochastic process $\{X(t), A(t), B(t), t\geq 0\}$ is called a semi-Markov game.
\end{definition}

Next, we will show the definition of the expected discounted payoff criterion in this paper, where  $\mathbb{E}_{x}^{\pi^{1}, \pi^{2}}$ denotes the expectation operator associated with $\mathbb{P}_{x}^{\pi^{1}, \pi^{2}}$.

\begin{definition}\label{defn4}
    For each $(\pi^{1}, \pi^{2})\in \Pi_{1} \times \Pi_{2}$, the initial state $x \in X$ and discount factor $\alpha(\cdot)>0$, the expected discounted payoff criterion for
    player $i$ is defined as follows:
    \begin{equation}\label{equ1}
    V_{i}(x, \pi^{1}, \pi^{2}):= \mathbb{E}_{x}^{\pi^{1}, \pi^{2}}\Big[\int_{0}^{\infty} e^{-\int_{0}^{t}\alpha(X(s),A(s),B(s))ds}
    r_{i}(X(t), A(t), B(t))dt\Big], \quad i=1,2 .
    \end{equation}
\end{definition}

\begin{remark}\label{re1}
    Since $r=r_{1}=-r_{2}$, we just need to consider the expected discounted payoff criterion for
    P1. Let
    $$V(x, \pi^{1}, \pi^{2}):=V_{1}(x, \pi^{1}, \pi^{2}).$$
\end{remark}

This paper  focuses on  the study of the value function and Nash
equilibrium of the SMG. So we need the following concepts.

\begin{definition}\label{defn5}
    The upper value and lower value of the expected discounted payoff SMG are defined as
    $$U(x):= \inf_{\pi^{2}\in \Pi_{2}}\sup_{\pi^{1}\in \Pi_{1}}V(x, \pi^{1}, \pi^{2})~\mathrm{and}~L(x):= \sup_{\pi^{1}\in \Pi_{1}} \inf_{\pi^{2}\in \Pi_{2}}V(x, \pi^{1}, \pi^{2}),$$
    respectively. Obviously, $U(x)\ge L(x)$ for all $x \in X$. Moreover, if it holds that $L(x)= U(x)$ for all $x \in X$, then the common function is called the value function of the  game and denoted by $V^{*}$.
\end{definition}

\begin{definition}\label{defn6}
    Assume that the game has a value $V^{*}$. Then a strategy
    $\pi^{1}_{*}\in \Pi_{1}$ is said to be optimal for P1 if
    $$\inf_{\pi^{2}\in \Pi_{2}} V(x, \pi^{1}_{*}, \pi^{2})=V^{*}(x),~~
    \forall x\in X.$$ Similarly, $\pi^{2}_{*}\in \Pi_{2}$ is
    said to be optimal for P2 if
    $$\sup_{\pi^{1}\in \Pi_{1}} V(x, \pi^{1}, \pi^{2}_{*})=V^{*}(x),~~ \forall x\in X.$$
    If $\pi^{i}_{*}$ is optimal for player $i$ ($i=1,2$), then we can
    call $(\pi^{1}_{*}, \pi^{2}_{*})$  a pair of optimal strategies
    (Nash equilibrium).
\end{definition}

\begin{remark}\label{re2}
    $(\pi^{1}_{*}, \pi^{2}_{*})$ is a pair of optimal strategies if and only if
    \begin{equation*}
    V(x, \pi^{1}, \pi^{2}_{*})\leq V(x, \pi^{1}_{*}, \pi^{2}_{*})\leq V(x, \pi^{1}_{*}, \pi^{2}),~~~ \forall \pi^{1}\in \Pi_{1}, \pi^{2}\in \Pi_{2}.
    \end{equation*}
\end{remark}
Remark~\ref{re2} is an effective method to verify whether a pair of
strategy $(\pi^{1},\pi^{2})$ is a Nash equilibrium, which is widely
used in the literature; see, for instance,
\cite{Luque-Vasquez2002}, and the references therein.

\section{Optimality Analysis}\label{sec3}

In this section, we give some suitable assumptions on the model
parameters under which the existence of the value function and a
pair of  optimal stationary strategies are guaranteed. The related
proofs are also discussed.

Given a measurable function $\omega : X\rightarrow [1, \infty)$, a function $u$ on $X$ is said to be $\omega$-bounded if it has finite $\omega$-norm which is defined as
$$\|u\|_{\omega}:=\sup_{x\in X}\frac{|u(x)|}{\omega(x)},$$
such a function $\omega$ can be referred to as a weight function. For convenience, we write $B_{\omega}(X)$ the Banach space of all $\omega$-bounded measurable functions on $X$.

Next, we give some hypotheses to guarantee the existence of a pair
of optimal strategies.
\begin{assumption}\label{ass1}
    There exist constants $\theta>0$ and $\delta>0$ such that $$H(\theta| x, a, b) \leq 1-\delta, \quad \forall(x, a, b) \in K.$$
\end{assumption}

\begin{remark}\label{re3}
    Assumption~\ref{ass1} is a regularity condition which indicates that for each fixed $x\in X$ and $\pi^{1}\in \Pi_{1},\pi^{2}\in \Pi_{2}$, we have
$$\mathbb{P}_{x}^{\pi^{1}, \pi^{2}}(\lim\limits_{n\to+\infty}T_{n}=+\infty )=1,$$
    which avoids possibility of infinitely many decision epochs during the finite time interval; see, for instance,  \cite{Lal1992},  \cite{Luque-Vasquez2002}, and the references therein .
\end{remark}

To guarantee the finiteness of the expected discounted payoff
defined in (\ref{equ1}), we propose the following assumption.
\begin{assumption}\label{ass2}
    (a) There exists a constant $\alpha_{0}>0$ such that $\alpha(x,a,b) \geq \alpha_{0}$ for all $ (x,a,b) \in K$.
    (b) There exists a measurable function $\omega : X\rightarrow [1, \infty)$ and a nonnegative constant $M$ such that for all $(x,a,b)\in K$,
\begin{equation*}
    |r(x,a,b)|\leq M\omega(x).
\end{equation*}
\end{assumption}

Below we give an important consequence of Assumption~\ref{ass1} and
Assumption~\ref{ass2}(a).
\begin{lem}\label{lem1}
    If Assumptions~\ref{ass1}\&\ref{ass2}(a) hold, then there exists a
    constant $0<\gamma<1$ such that for each $(x, a, b)
    \in K$,
\begin{equation}\label{equ2}
    \int_{0}^{\infty} e^{-\alpha(x,a,b) t} H(d t|x, a, b) \leqslant \gamma
\end{equation}
    \begin{proof}
        For each fixed $(x, a, b) \in K$, integrating by parts and we have
\begin{align*}
        \int_{0}^{\infty} e^{-\alpha(x,a,b) t} H(d t | x, a, b)
        &=\alpha(x,a,b) \int_{0}^{\infty} e^{-\alpha(x,a,b) t} H(t | x, a, b) d t \\
        &=\alpha(x,a,b)\Big[\int_{0}^{\theta} e^{-\alpha(x,a,b) t} H(t | x, a, b) d t+\int_{\theta}^{\infty} e^{-\alpha(x,a,b) t} H(t |x, a, b) d t\Big] \\
        &\leq \alpha(x,a,b)\Big[(1-\delta)\int_{0}^{\theta}  e^{-\alpha(x,a,b) t} d t+\int_{\theta}^{\infty} e^{-\alpha(x,a,b) t} dt\Big]  \\
        &=1-\delta \left(1-e^{-\alpha(x,a,b)  \theta}\right)\\
        &\leqslant 1-\delta+\delta e^{-\alpha_{0}\theta}<1.
\end{align*}
        Let $\gamma=1-\delta+\delta e^{-\alpha_{0}\theta}$,
        which yields (\ref{equ2}).
    \end{proof}
\end{lem}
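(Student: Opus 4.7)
The plan is to convert the Lebesgue-Stieltjes integral against the distribution $H(\cdot|x,a,b)$ into an ordinary Riemann integral against $H(t|x,a,b)$ via integration by parts, and then exploit Assumption~\ref{ass1} on a finite interval and the trivial bound $H \leq 1$ on its complement. First I would observe that since $H(0|x,a,b)=0$ (from the kernel property (b)) and $H(\infty|x,a,b) \leq 1$, the boundary terms vanish after integration by parts, giving
$$\int_{0}^{\infty} e^{-\alpha(x,a,b)t} H(dt|x,a,b) = \alpha(x,a,b)\int_{0}^{\infty} e^{-\alpha(x,a,b)t} H(t|x,a,b)\,dt.$$

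Next I would split the outer integral at the threshold $\theta$ from Assumption~\ref{ass1}. On $[0,\theta]$ the monotonicity of $H(\cdot|x,a,b)$ together with $H(\theta|x,a,b) \leq 1-\delta$ yields $H(t|x,a,b)\leq 1-\delta$, while on $[\theta,\infty)$ one just uses $H(t|x,a,b) \leq 1$. Substituting these bounds reduces both pieces to elementary exponential integrals, which after simplification collapse to
$$1 - \delta\bigl(1 - e^{-\alpha(x,a,b)\theta}\bigr).$$

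Finally I would apply Assumption~\ref{ass2}(a), $\alpha(x,a,b) \geq \alpha_0 > 0$, so that $e^{-\alpha(x,a,b)\theta} \leq e^{-\alpha_0\theta}$, producing the uniform constant $\gamma := 1 - \delta + \delta e^{-\alpha_0\theta}$. Since $\theta>0$ and $\alpha_0>0$, we have $e^{-\alpha_0\theta} < 1$, hence $\gamma < 1$; since $\delta \leq 1$, we also get $\gamma > 0$. The bound (\ref{equ2}) then holds for every $(x,a,b) \in K$.

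There is no real obstacle here; the only subtlety is justifying the integration by parts cleanly for a distribution function that need not be continuous and need not have total mass one, which is handled by using the fact that $H(\cdot|x,a,b)$ is nondecreasing and right-continuous with $H(0|x,a,b)=0$, so that $\int_0^\infty e^{-\alpha t}H(dt) = [e^{-\alpha t}H(t)]_0^\infty + \alpha\int_0^\infty e^{-\alpha t}H(t)\,dt$ and the boundary contribution at infinity is zero by dominated convergence. Everything else is a one-line exponential computation.
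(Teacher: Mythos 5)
Your proposal is correct and follows essentially the same route as the paper's proof: integration by parts, splitting the integral at $\theta$, bounding $H(t|x,a,b)$ by $1-\delta$ on $[0,\theta]$ via monotonicity and by $1$ beyond, and then using $\alpha(x,a,b)\geq\alpha_{0}$ to obtain the uniform constant $\gamma=1-\delta+\delta e^{-\alpha_{0}\theta}$. Your added remark justifying the integration by parts for a possibly discontinuous, sub-probability distribution function is a welcome clarification of a step the paper takes for granted.
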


\begin{assumption}\label{ass3}
    There exists a constant $\eta$ with $0<\eta\gamma<1$ such that for all fixed $t\geq 0$ and $(x,a,b)\in K$,
\begin{equation}\label{equ3}
    \int_{X} \omega(y) Q(t,dy|x,a,b) \leq
    \eta\omega(x)H(t|x,a,b),
\end{equation}
    where $\omega(\cdot)$ is the function mentioned in Assumption~\ref{ass2}.
\end{assumption}

\begin{remark}\label{re4}
$(1)$  We call Assumption~\ref{ass3}  the ``drift condition", which
is needed to ensure that the Shapley operator (defined later in
(\ref{equ5})) is a contraction operator as well as our main results.
    Particularly, if $Q(t,y|x,a,b)=H(t|x,a,b)P(y|x,a,b)$, where $P(y|x,a,b)$ denotes the state transition probability,
    (\ref{equ3}) degenerates into  $\int_{X} \omega(y) P(dy|x,a,b) \leq
    \eta\omega(x)$, which is the same as the Assumption~\ref{ass3}(b) of
    \cite{Luque-Vasquez2002}. Thus, our Assumption~\ref{ass3} is
    more general than the counterpart in the literature \cite{Luque-Vasquez2002}.

$(2)$ Combining Lemma~\ref{lem1} with  Assumption~\ref{ass3}, it is easy to derive
\begin{equation}\label{equ30}
\int_{0}^{\infty} e^{-\alpha(x,a,b) t}\int_{X} u(y) Q(d t, d y | x, a, b)\leq \eta\gamma\|u\|_{\omega}\omega(x),\quad\forall u\in B_{\omega}(X),(x,a,b)\in K.
\end{equation}
\end{remark}
Moreover, we impose the following continuity-compactness conditions
to ensure the existence of a pair of optimal stationary strategies of our SMG model.
\begin{assumption}\label{ass4}
    (a) For each fixed $x\in X$, $A(x)$ and $B(x)$ are both compact sets.

    (b) For each fixed  $(x, a, b)\in K$, $r(x,\cdot,b)$ is upper semi-continuous  on $A(x)$ and $r(x,a,\cdot)$ is lower semi-continuous  on $B(x)$.

    (c) For each fixed $(x, a, b)\in K$, $t\geq 0$ and $v\in B_{\omega}(X)$, the functions
$$a \longmapsto \int v(y) Q(t,d y | x, a, b) \quad \text { and } \quad b \longmapsto \int v(y) Q(t,d y | x, a, b)$$
are continuous on $A(x)$ and $B(x)$, respectively.

    (d) For each fixed  $t\geq 0$, $H(t|\cdot,\cdot,\cdot)$ is continuous on $K$.

    (e) The function $\alpha(x,a,b)$ is continuous on $K$.
\end{assumption}

\begin{remark}\label{re5}
    $(1)$ Assumption~\ref{ass4} is similar to the standard  continuity-compactness hypotheses for Markov control processes; see, for instance,  \cite{Hernandez-Lerma1999}, and the references therein. It is commonly used for the existence of minmax points of  games.

    $(2)$ By Lemma~$1.11$ in  \cite{Nowak1984}, if Assumption~\ref{ass4}(a) holds, then the probability spaces $\mathbb{A}(x):=\mathbb{P}(A(x)) \text { and } \mathbb{B}(x):=\mathbb{P}(B(x))$ are also compact for each $x\in X$.
\end{remark}

We now introduce the following notations: for each given function $u \in B_{w}(X)$ and $(x,a,b)\in K$, we write
\begin{equation}\label{equ4}
G(u, x, a, b):=r(x, a, b)\int_{0}^{\infty} e^{-\alpha(x,a,b) t}(1-H(t| x, a, b))dt+\int_{0}^{\infty} e^{-\alpha(x,a,b) t}\int_{X} u(y) Q(d t, d y | x, a, b).
\end{equation}
For each fixed $x \in X$ and probability measures $\mu \in \mathbb{A}(x) \text { and } \lambda \in \mathbb{B}(x)$, we denote
\begin{equation*}
G(u, x, \mu, \lambda):=\int_{A(x)} \int_{B(x)} G(u, x, a, b) \mu(d a) \lambda(d b),
\end{equation*}
whenever the integral is   well defined.

We define an operator $T$ on $B_{\omega}(X)$ by
\begin{equation}\label{equ5}
Tu(x):=\sup _{\mu \in \mathbb{A}(x)} \inf _{\lambda \in \mathbb{B}(x)} G(u, x, \mu, \lambda),\quad  \forall x \in X,
\end{equation}
which is called the Shapley operator. A function $v \in
B_{w}(X)$ is said to be a solution of the Shapley equation if
$$Tv(x)=v(x), \quad  \forall x \in X.$$

In order to explore the existence of a pair of optimal stationary
strategies, we also need to define another operator $T(f,g)$ on
$B_{w}(X)$ by
$$T(f,g)u(x):=G(u, x, f(x), g(x)),\quad  \forall x \in X,$$
where $(f, g) \in \Phi_{1} \times \Phi_{2}$ is a pair of stationary strategies.

Before stating our main results, we need the following lemmas:

\begin{lem}\label{lem2}
    Suppose that Assumptions~\ref{ass1}-\ref{ass4} hold, then for each given function $u \in B_{\omega}(X)$, the function $Tu$ is in $B_{\omega}(X)$ and
\begin{equation}\label{equ6}
    Tu(x):=\min _{\lambda \in \mathbb{B}(x)} \max _{\mu \in \mathbb{A}(x)} G(u,x, \mu,
    \lambda).
\end{equation}
    Moreover, there exists a pair of stationary strategies $(f, g) \in \Phi_{1} \times \Phi_{2}$ such that
\begin{equation}\label{equ7}
\begin{aligned}
    T u(x) &=G(u, x, f(x), g(x)) \\
    &=\max _{\mu \in \mathbb{A}(x)} G(u, x, \mu, g(x)) \\
    &=\min _{\lambda \in \mathbb{B}(x)} G(u, x, f(x), \lambda).
\end{aligned}
\end{equation}
\end{lem}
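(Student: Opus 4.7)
The plan is to prove the lemma in four stages: (i) establishing the $\omega$-norm bound $Tu \in B_\omega(X)$; (ii) showing appropriate semi-continuity of $G(u,x,\cdot,\cdot)$ on the action spaces; (iii) applying a minimax theorem to obtain (\ref{equ6}); and (iv) invoking a measurable selection theorem to produce the stationary strategies $f,g$ realizing (\ref{equ7}).

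For (i), I would bound $|G(u,x,a,b)|$ uniformly in $(a,b)$: the reward term satisfies $|r(x,a,b)|\int_0^\infty e^{-\alpha(x,a,b)t}(1-H(t|x,a,b))\,dt \leq M\omega(x)/\alpha_0$ by Assumption~\ref{ass2}, while the kernel term is bounded by $\eta\gamma\|u\|_\omega \omega(x)$ via (\ref{equ30}); passing to $\sup_\mu\inf_\lambda$ preserves the bound.  For (ii), I would first establish that $G(u,x,\cdot,b)$ is upper semi-continuous on $A(x)$ and $G(u,x,a,\cdot)$ is lower semi-continuous on $B(x)$.  Continuity of $\alpha$ and of $H(t|\cdot,\cdot,\cdot)$ in $(a,b)$ (Assumption~\ref{ass4}(d)(e)) combined with bounded convergence makes $(a,b)\mapsto \int_0^\infty e^{-\alpha(x,a,b)t}(1-H(t|x,a,b))\,dt$ continuous.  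An integration by parts analogous to the one used in Lemma~\ref{lem1} (valid because $Q(0,\cdot|x,a,b)=0$ and $\int_X u\,dQ(t,\cdot|x,a,b)$ is $\omega$-dominated uniformly in $t$) rewrites the second integral of (\ref{equ4}) as $\alpha(x,a,b)\int_0^\infty e^{-\alpha(x,a,b) t}\int_X u(y)Q(t,dy|x,a,b)\,dt$, which is continuous in $a$ and in $b$ by Assumption~\ref{ass4}(c) together with dominated convergence.  Combined with Assumption~\ref{ass4}(b) on $r$, this yields the desired semi-continuity in pure actions.  Lifting to mixed actions, the Portmanteau theorem implies that $\mu \mapsto G(u,x,\mu,\lambda)$ is affine and upper semi-continuous on the weakly compact convex set $\mathbb{A}(x)$ (compactness by Remark~\ref{re5}(2)), and symmetrically $\lambda \mapsto G(u,x,\mu,\lambda)$ is affine and lower semi-continuous on $\mathbb{B}(x)$.

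For (iii), Sion's minimax theorem then applies to this affine, separately semi-continuous function on the product of compact convex sets $\mathbb{A}(x) \times \mathbb{B}(x)$, giving $\sup_\mu\inf_\lambda G(u,x,\mu,\lambda) = \inf_\lambda\sup_\mu G(u,x,\mu,\lambda)$ pointwise in $x$ with all four extrema attained by compactness.  This is exactly (\ref{equ6}), and moreover delivers, for every $x$, a saddle point $(f(x),g(x))\in\mathbb{A}(x)\times\mathbb{B}(x)$ satisfying (\ref{equ7}).

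The main obstacle I expect is step (iv): upgrading these pointwise saddle points to Borel-measurable stationary strategies $f \in \Phi_1$ and $g \in \Phi_2$.  I would appeal to a measurable minimax selection theorem for zero-sum games with state-dependent Polish action sets, for instance of the type developed in \cite{Nowak1984}.  Its hypotheses require joint Borel measurability of $G(u,\cdot,\cdot,\cdot)$ on $K$, which I would verify using Fubini together with the joint Borel measurability of $Q(t,D|\cdot,\cdot,\cdot)$ and $H(t|\cdot,\cdot,\cdot)$ built into property~(a) of the semi-Markov kernel, the continuity of $\alpha$, and the Borel measurability of $r$ on $K$.
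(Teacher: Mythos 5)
Your proposal is correct and follows essentially the same route as the paper: the $\omega$-norm bound via (\ref{equ30}), semicontinuity of $G$ lifted from pure to mixed actions on the compact sets $\mathbb{A}(x)$ and $\mathbb{B}(x)$, a minimax theorem to justify the interchange (the paper cites Fan's theorem where you cite Sion's — both apply here), and Nowak's measurable selection theorem to obtain the stationary strategies. Your extra care with the continuity of the two integral terms (integration by parts plus dominated convergence so that Assumption~\ref{ass4}(c) can be invoked) fills in a step the paper asserts directly from Assumption~\ref{ass4}.
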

\begin{proof}

    By Assumption~\ref{ass2} and formulation (\ref{equ30}), for each given function $u \in B_{w}(X)$ and
    $(x,a,b)\in K$, we can easily get
\begin{equation*}
    |G(u, x, a, b)|\leq\frac{M}{\alpha_{0}}\cdot\omega(x)+\eta\gamma\|u\|_{w} \cdot
    \omega(x).
\end{equation*}
    The above inequality yields $\|G(u, \cdot, a, b)\|_{\omega}\leq
    \frac{M}{\alpha_{0}}+\eta\gamma\|u\|_{w}$, which implies
    $G(u, x, a, b)$ is in $B_{\omega}(X)$, and so $Tu \in
    B_{\omega}(X)$.

    On the one hand, by Assumption~\ref{ass4}, it follows that
    $G(u,x,\cdot,b)$ is upper semi-continuous in $A(x)$, then for each
    fixed $\lambda \in \mathbb{B}(x)$, by the Fatou's theorem, the
    function
$$a \longmapsto \int_{B(x)} G(u, x, a, b) \lambda(d b)$$
    is also upper semi-continuous in $A(x)$. Moreover, since the
    probability measures on $\mathcal{B}(X)$ endowed with the topology
    of weak convergence, by Theorem~$2.8.1$ in \cite{Ash2000}, the
    function $G(u, x, \cdot, \lambda)$ is upper semi-continuous in
    $\mathbb{A}(x)$. Similarly, $G(u, x, \mu, \cdot)$ is lower
    semi-continuous in $\mathbb{B}(x)$. Thus, by Theorem~$A.2.3$ in
    \cite{Ash2000}, the supremum and the infimum are indeed attained in
    (\ref{equ5}), which means
$$Tu(x)=\max _{\lambda \in \mathbb{B}(x)} \min _{\mu \in \mathbb{A}(x)} G(u, x, \mu, \lambda).$$
    Then, by the Fan's minimax Theorem \citep{Fan1953}, we obtain
    (\ref{equ6}).

    On the other hand, it is clear that $G(x, u, \mu, \lambda)$ is both concave and convex in $\mathbb{A}(x)$ with respect to $\mu$ and in $\mathbb{B}(x)$ with respect to $\lambda$. Hence, by the well-known measurable selection theorem  \citep{Nowak1985}, there exists a pair of stationary strategies $(f, g) \in \Phi_{1} \times \Phi_{2}$  that satisfies (\ref{equ7}).
\end{proof}

\begin{lem}\label{lem3}
    Both $T$ and $T(f,g)$  are contraction operators with modulus less than $1$.
\end{lem}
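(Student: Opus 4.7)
The plan is to exploit the fact that the map $u \mapsto G(u,x,a,b)$ is affine in $u$, with the only $u$-dependence appearing in the second term of (\ref{equ4}). So for $u,v \in B_\omega(X)$ and $(x,a,b)\in K$, the difference is
\begin{equation*}
G(u,x,a,b)-G(v,x,a,b)=\int_0^\infty e^{-\alpha(x,a,b)t}\int_X\bigl(u(y)-v(y)\bigr)Q(dt,dy|x,a,b).
\end{equation*}
Applying the pointwise bound $|u(y)-v(y)|\leq \|u-v\|_\omega\,\omega(y)$ and then invoking (\ref{equ30}) from Remark~\ref{re4} with $u$ replaced by $|u-v|$, I would obtain the clean pointwise estimate
\begin{equation*}
|G(u,x,a,b)-G(v,x,a,b)|\leq \eta\gamma\,\|u-v\|_\omega\,\omega(x),\qquad\forall (x,a,b)\in K.
\end{equation*}
Integrating against the probability measures $\mu\in\mathbb{A}(x)$, $\lambda\in\mathbb{B}(x)$ preserves the same bound for $G(u,x,\mu,\lambda)-G(v,x,\mu,\lambda)$.

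For the operator $T(f,g)$ the contraction now follows immediately: taking $\mu=f(x)$, $\lambda=g(x)$ gives $|T(f,g)u(x)-T(f,g)v(x)|\leq \eta\gamma\,\|u-v\|_\omega\,\omega(x)$, and dividing by $\omega(x)$ and taking the supremum over $x$ yields $\|T(f,g)u-T(f,g)v\|_\omega\leq \eta\gamma\,\|u-v\|_\omega$, with $\eta\gamma<1$ by Assumption~\ref{ass3}.

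For the Shapley operator $T$ I would apply the standard sup-inf stability argument: for every $x\in X$,
\begin{equation*}
Tu(x)-Tv(x)=\sup_{\mu\in\mathbb{A}(x)}\inf_{\lambda\in\mathbb{B}(x)}G(u,x,\mu,\lambda)-\sup_{\mu\in\mathbb{A}(x)}\inf_{\lambda\in\mathbb{B}(x)}G(v,x,\mu,\lambda),
\end{equation*}
and a routine two-line manipulation (add and subtract the inner infimum evaluated at a near-optimal $\mu$, then use the pointwise estimate above inside both the inf and the sup) gives $|Tu(x)-Tv(x)|\leq \sup_{\mu,\lambda}|G(u,x,\mu,\lambda)-G(v,x,\mu,\lambda)|\leq \eta\gamma\,\|u-v\|_\omega\,\omega(x)$. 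Dividing by $\omega(x)$ and taking $\sup_{x\in X}$ gives $\|Tu-Tv\|_\omega\leq \eta\gamma\,\|u-v\|_\omega$.

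There is no real obstacle here; everything reduces to the drift-type bound (\ref{equ30}), which is already in place thanks to Lemma~\ref{lem1} and Assumption~\ref{ass3}. The only small care point is the sup-inf manipulation for $T$, but this is a standard fact (often stated as: if $|F(\mu,\lambda)-G(\mu,\lambda)|\leq c$ pointwise, then $|\sup_\mu\inf_\lambda F-\sup_\mu\inf_\lambda G|\leq c$), and one may also verify it directly using the attained maxima and minima provided by Lemma~\ref{lem2}. The common contraction modulus is $\eta\gamma\in(0,1)$.
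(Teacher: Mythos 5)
Your proposal is correct and follows essentially the same route as the paper: both arguments reduce the contraction to the drift bound (\ref{equ30}) with modulus $\eta\gamma<1$, and both pass from the pointwise estimate on $G$ to the Shapley operator by the standard observation that a one-sided bound is preserved under $\sup_\mu\inf_\lambda$ (the paper phrases this as taking max--min on both sides of the inequality $T(f,g)u\le T(f,g)v+\eta\gamma\|u-v\|_\omega\,\omega$, using monotonicity and the translation $u\le v+\|u-v\|_\omega\,\omega$, while you compute the affine difference directly — a cosmetic distinction only).
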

\begin{proof}
    First, it is easy to verify that the operator $T(f,g)$ is monotonically increasing. Let $u,v \in B_{\omega}(X)$, by the definition of $\omega$-norm, $u(\cdot) \leq v(\cdot)+\|u-v\|_{\omega}\omega(\cdot)$, it follows that for  each fixed $x\in
    X$, we have
\begin{equation}\label{equ8}
\begin{aligned}
    T(f, g) u(x) & \leqslant T(f, g)(v+\omega \|u-v\|_{\omega})(x) \\
    &=T(f,g)v(x)\\
    &+ \|u-v\|_{\omega}  \int_{A(x)} \int_{B(x)}\Big[\int_{0}^{\infty} e^{-\alpha(x,a,b) t} \int_{X}\omega (y)Q(d t, d y| x, a, b)\Big] f(da|x) g(db|x)\\
    & \leqslant T(f,g)v(x)+\eta\gamma \|u-v\|_{\omega}\omega (x),
\end{aligned}
\end{equation}
    where the last inequality is followed by formulation (\ref{equ30}).
    Furthermore, taking maximum of $f\in \Phi_{1}$ and minimum of $g\in
    \Phi_{2}$ on both sides of the inequality (\ref{equ8}), we have
    $$\max _{f \in \Phi_{1}} \min _{g\in \Phi_{2}}T(f,g)u(x)\leq \max _{f \in \Phi_{1}} \min _{g\in \Phi_{2}}T(f,g)v(x)+\eta\gamma \|u-v\|_{\omega}\omega (x),$$
    i.e.
    $$Tu(x)\leq Tv(x)+\eta\gamma \|u-v\|_{\omega}\omega (x).$$
    Similarly, interchanging $u$ and $v$, we obtain
    $$Tv(x)\leq Tu(x)+\eta\gamma \|v-u\|_{\omega}\omega (x).$$
    Combining the two inequalities above, we have
    $$|Tu(x)-Tv(x)|\leq \eta\gamma \|u-v\|_{\omega}\omega (x) ,\quad \forall x\in X,$$
    i.e.
    $$\|Tu-Tv\|_{\omega} \leq \eta\gamma \|u-v\|_{\omega},$$
    which implies $T$ is a contraction operator with modulus
    $\eta\gamma<1$. Using the same arguments, we can prove that
    $T(f,g)$ is also a contraction operator with modulus
    $\eta\gamma<1$.
\end{proof}

Since $T$ and $T(f,g)$ are both contraction operators, then there
exist unique functions $u^{*}\in B_{\omega}(X)$ and $u_{f,g}^{*}\in
B_{\omega}(X)$ such that $Tu^{*}(\cdot)=u^{*}(\cdot)$ and
$T(f,g)u_{f,g}^{*}(\cdot)=u_{f,g}^{*}(\cdot)$ by the Banach's fixed
point theorem.

\begin{lem}\label{lem4}
    For each $(\pi^{1}, \pi^{2})\in \Pi_{1} \times \Pi_{2}$ and $x \in X$,
    $$V(x,\pi^{1}, \pi^{2})=T(\pi_{0}^{1}, \pi_{0}^{2})V(x, ^{(1)}\!\pi^{1},  ^{(1)}\!\pi^{2}),$$
    where $\pi^{i}:=(\pi_{n}^{i},n\geq 0)$, and $^{(1)}\!\pi^{i}:=(\pi_{n}^{i},n\geq 1)$ which denotes the translation of strategy.
\end{lem}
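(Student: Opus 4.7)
The plan is to decompose the discounted payoff integral at the first jump time $T_1$, identify each of the two resulting pieces with one of the summands defining $G$ in (\ref{equ4}), and then average over the initial actions against $\pi_0^1(da|0,x)\pi_0^2(db|0,x)$ to recognise $T(\pi_0^1,\pi_0^2)V(x,{}^{(1)}\!\pi^1,{}^{(1)}\!\pi^2)$. Throughout, the Tulcea construction of $(\Omega,\mathcal{F},\mathbb{P}_x^{\pi^1,\pi^2})$ justifies the conditional laws used, while Assumption~\ref{ass2} together with (\ref{equ30}) guarantees that every expectation and Fubini swap is well defined.

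First, because $(X(s),A(s),B(s))\equiv(x,A_0,B_0)$ on $[0,T_1)$ by construction, so that $\int_0^t \alpha(X(s),A(s),B(s))\,ds=\alpha(x,A_0,B_0)t$ on that interval, I would split
\[
V(x,\pi^1,\pi^2)=\mathbb{E}_x^{\pi^1,\pi^2}\!\!\left[\int_0^{T_1} e^{-\alpha(x,A_0,B_0)t} r(x,A_0,B_0)\,dt\right]+\mathbb{E}_x^{\pi^1,\pi^2}\!\!\left[\int_{T_1}^{\infty} e^{-\int_0^t \alpha(\cdot)\,ds} r(\cdot)\,dt\right].
\]
For the first term, conditioning on $(A_0,B_0)=(a,b)$ gives $T_1$ the distribution $H(\cdot|x,a,b)$, and the integration-by-parts computation used in the proof of Lemma~\ref{lem1} (equivalently, a Fubini exchange) turns the inner expectation into $r(x,a,b)\int_0^\infty e^{-\alpha(x,a,b)t}(1-H(t|x,a,b))\,dt$, which is exactly the first summand of $G(\cdot,x,a,b)$.

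For the second term I would shift time by $T_1$: on $\{t\geq T_1\}$ the identity
\[
\int_0^t\alpha(\cdot)\,ds=\alpha(x,A_0,B_0)T_1+\int_0^{t-T_1}\alpha(X(s+T_1),A(s+T_1),B(s+T_1))\,ds
\]
lets me factor $e^{-\alpha(x,A_0,B_0)T_1}$ out of the discount, and a change of variables $u=t-T_1$ reduces the residual integral to the discounted payoff of the shifted state-action process $(X(\cdot+T_1),A(\cdot+T_1),B(\cdot+T_1))$. Conditioning on $(A_0,B_0,T_1,X_1)$ and invoking the Tulcea construction, this shifted process is distributed as the continuous-time state-action process of a semi-Markov game started at $X_1$ under the shifted strategy pair $({}^{(1)}\!\pi^1,{}^{(1)}\!\pi^2)$. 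Hence the conditional expectation equals $e^{-\alpha(x,A_0,B_0)T_1}V(X_1,{}^{(1)}\!\pi^1,{}^{(1)}\!\pi^2)$, and averaging against the joint law $Q(dt,dy|x,A_0,B_0)$ of $(T_1,X_1)$ reproduces exactly the second summand of $G(V(\cdot,{}^{(1)}\!\pi^1,{}^{(1)}\!\pi^2),x,A_0,B_0)$.

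Finally, taking the outer expectation over $(A_0,B_0)$ against $\pi_0^1(da|0,x)\pi_0^2(db|0,x)$ and adding the two pieces yields $G(V(\cdot,{}^{(1)}\!\pi^1,{}^{(1)}\!\pi^2),x,\pi_0^1,\pi_0^2)=T(\pi_0^1,\pi_0^2)V(x,{}^{(1)}\!\pi^1,{}^{(1)}\!\pi^2)$, as claimed. I expect the main obstacle to be the rigorous identification of the shifted process with a new semi-Markov game: since $({}^{(1)}\!\pi^1,{}^{(1)}\!\pi^2)$ are history-dependent with respect to the original history $h_n$ rather than to the shifted history, one must appeal to the defining properties of $\mathbb{P}_x^{\pi^1,\pi^2}$ to verify that, conditionally on the pre-$T_1$ data, the remaining trajectory has the same distribution as a freshly started semi-Markov game from state $X_1$ under the re-indexed strategies.
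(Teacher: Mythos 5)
Your proposal is correct and follows essentially the same route as the paper's proof: decompose the payoff at $T_1$, evaluate the pre-$T_1$ piece via the sojourn-time distribution $H$, use the (strong) Markov property of the Tulcea-constructed process to identify the post-$T_1$ piece with $e^{-\alpha(x,A_0,B_0)T_1}V(X_1,{}^{(1)}\!\pi^1,{}^{(1)}\!\pi^2)$ integrated against $Q$, and then average over $\pi_0^1\otimes\pi_0^2$. The subtlety you flag at the end — rigorously matching the shifted process to a freshly started game under the re-indexed strategies — is precisely the step the paper compresses into the phrase ``the fifth equality follows from the strong Markovian property,'' so your treatment is, if anything, slightly more explicit.
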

\begin{proof}
    \begin{align*}
    {V(x, \pi^{1}, \pi^{2})}&= \mathbb{E}_{x}^{\pi^{1}, \pi^{2}}\Big[\int_{0}^{\infty}e^{-\int_{0}^{t}\alpha(X(s),A(s),B(s))ds}r(X(t), A(t), B(t))dt\Big]\\
    &= \mathbb{E}_{x}^{\pi^{1},\pi^{2}}\Big[\int_{0}^{T_{1}} e^{-\int_{0}^{t}\alpha(X(s),A(s),B(s))ds} r(X(t), A(t), B(t))dt\Big]\\
    &+ \mathbb{E}_{x}^{\pi^{1},\pi^{2}}\Big[\int_{T_{1}}^{\infty} e^{-\int_{0}^{t}\alpha(X(s),A(s),B(s))ds} r(X(t), A(t), B(t))dt\Big]\\
    &=
    \mathbb{E}_{x}^{\pi^{1},\pi^{2}}\Big[\int_{0}^{\infty}\mathbbm{1}_{\{T_{1}>t\}} e^{-\alpha(x,A_{0},B_{0})t} r(x, A_{0},B_{0})dt\Big]\\
    &+
    \mathbb{E}_{x}^{\pi^{1},\pi^{2}}\bigg[  \mathbb{E}_{x}^{\pi^{1},\pi^{2}}\Big[\int_{T_{1}}^{\infty}e^{-\alpha(x,A_{0},B_{0})T_{1}} e^{-\int_{T_{1}}^{t}\alpha(X(s),A(s),B(s))ds} r(X(t), A(t), B(t))dt|h_{1}\Big]\bigg]\\
    &=
    \int_{A(x)}\int_{B(x)}\Big[\int_{0}^{\infty}e^{-\alpha(x,a,b)t}\big[1-H(t| x, a, b)\big]r(x,a,b)dt\Big] \pi_{0}^{1}(da|x) \pi_{0}^{2}(db|x)\\
    &+
    \mathbb{E}_{x}^{\pi^{1},\pi^{2}}\bigg[e^{-\alpha(x,A_{0},B_{0})T_{1}}   \mathbb{E}_{x}^{\pi^{1},\pi^{2}}\Big[\int_{T_{1}}^{\infty} e^{-\int_{T_{1}}^{t}\alpha(X(s),A(s),B(s))ds} r(X(t), A(t), B(t))dt|h_{1}\Big]\bigg]\\
    &=
    \int_{A(x)}\int_{B(x)}\Big[\int_{0}^{\infty}e^{-\alpha(x,a,b)t}\big[1-H(t| x, a, b)\big]r(x,a,b)dt\Big] \pi_{0}^{1}(da|x) \pi_{0}^{2}(db|x)\\
    &+
    \mathbb{E}_{x}^{\pi^{1},\pi^{2}}\Big[e^{-\alpha(x,A_{0},B_{0})T_{1}}V(x(T_{1}),^{(1)}\!\pi^{1},^{(1)}\!\pi^{2})\Big]\\
    &=
    \int_{A(x)}\int_{B(x)}r(x,a,b)\Big[\int_{0}^{\infty}e^{-\alpha(x,a,b)t}\big[1-H( t|x, a, b)\big]dt\Big] \pi_{0}^{1}(da|x) \pi_{0}^{2}(db|x)\\
    &+
    \int_{A(x)}\int_{B(x)}\Big [\int_{0}^{\infty}e^{-\alpha(x,a,b)t}\int_{X} V(y,^{(1)}\!\pi^{1},^{(1)}\!\pi^{2})Q(d t,dy| x, a, b)\Big]\pi_{0}^{1}(da|x) \pi_{0}^{2}(db|x)\\
    &=
    \int_{A(x)}\int_{B(x)}\left\{r(x,a,b)\Big[\int_{0}^{\infty}e^{-\alpha(x,a,b)t}\big[1-H(t|x, a, b)\big]dt\Big]+\right.\\
    &\phantom{=\;\;}\left.
    \int_{0}^{\infty}e^{-\alpha(x,a,b)t}\Big [\int_{X} V(y,^{(1)}\!\pi^{1},^{(1)}\!\pi^{2})Q(d t,dy| x, a, b)\Big] \right\}\pi_{0}^{1}(da|x) \pi_{0}^{2}(db|x),
    \end{align*}
    where the third and fourth equalities are ensured by the property of conditional expectation. The
    fifth equality follows from the strong Markovian property.
    Hence, $$V(x,\pi^{1}, \pi^{2})=T(\pi_{0}^{1}, \pi_{0}^{2})V(x,^{(1)}\!\pi^{1},^{(1)}\!\pi^{2}),$$
    which is required.
\end{proof}

Now, if we set $\pi^{1}=f$ and $\pi^{2}=g$ specially, which are both
stationary strategies, from Lemma~\ref{lem4}, we have
\begin{equation*}
V(x,f,g)=T(f,g)V(x,f,g),~~~~\forall x\in X,
\end{equation*}
which implies that the function $V(x,f,g)$ is the unique fixed point
of the contraction operator $T(f,g)$.

\begin{lem}\label{lem5}
    Suppose that Assumptions~\ref{ass1}-\ref{ass3} hold, let $(\pi^{1}, \pi^{2})\in \Pi_{1} \times \Pi_{2}$, then for each $x\in X$, $u\in B_{\omega}(X)$, we have
    $$\lim\limits_{n\to+\infty}\mathbb{E}_{x}^{\pi^{1}, \pi^{2}}\Big[e^{-\int_{0}^{T_{n}}\alpha(X(s),A(s),B(s))ds}u(X_{n})\Big]=0$$
\end{lem}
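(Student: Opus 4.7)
The plan is to reduce everything to the $\omega$-bounded case by pulling out $\|u\|_\omega$, and then to establish the geometric bound
\[
\mathbb{E}_{x}^{\pi^{1},\pi^{2}}\Big[e^{-\int_{0}^{T_{n}}\alpha(X(s),A(s),B(s))ds}\omega(X_{n})\Big]\le (\eta\gamma)^{n}\omega(x)
\]
by induction on $n$. Since $\eta\gamma<1$ by Assumption~\ref{ass3}, this immediately yields the lemma, because $|u(y)|\le \|u\|_{\omega}\omega(y)$ gives
\[
\Big|\mathbb{E}_{x}^{\pi^{1},\pi^{2}}\big[e^{-\int_{0}^{T_{n}}\alpha\,ds}u(X_{n})\big]\Big|\le \|u\|_{\omega}\,\mathbb{E}_{x}^{\pi^{1},\pi^{2}}\big[e^{-\int_{0}^{T_{n}}\alpha\,ds}\omega(X_{n})\big]\to 0.
\]

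The key step is the one-step contraction. The essential observation is that on the sojourn interval $[T_{n},T_{n+1})$, the processes $X(s), A(s), B(s)$ are constant equal to $X_{n},A_{n},B_{n}$, so
\[
\int_{0}^{T_{n+1}}\!\!\alpha(X(s),A(s),B(s))\,ds=\int_{0}^{T_{n}}\!\!\alpha(X(s),A(s),B(s))\,ds+\alpha(X_{n},A_{n},B_{n})(T_{n+1}-T_{n}).
\]
Conditioning on $(h_{n},a_{n},b_{n})$, the joint law of $(T_{n+1}-T_{n},X_{n+1})$ is $Q(\cdot,\cdot|x_{n},a_{n},b_{n})$, so by the definition of the semi-Markov kernel and (\ref{equ30}) applied with $u=\omega$ (noting $\|\omega\|_{\omega}=1$),
\[
\mathbb{E}_{x}^{\pi^{1},\pi^{2}}\!\Big[e^{-\alpha(x_{n},a_{n},b_{n})(T_{n+1}-T_{n})}\omega(X_{n+1})\,\Big|\,h_{n},a_{n},b_{n}\Big]=\int_{0}^{\infty}\!\!\int_{X}e^{-\alpha(x_{n},a_{n},b_{n})t}\omega(y)Q(dt,dy|x_{n},a_{n},b_{n})\le \eta\gamma\,\omega(x_{n}).
\]
Multiplying by the $\mathcal{F}_{T_{n}}$-measurable factor $e^{-\int_{0}^{T_{n}}\alpha\,ds}$ and taking full expectation gives
\[
\mathbb{E}_{x}^{\pi^{1},\pi^{2}}\!\Big[e^{-\int_{0}^{T_{n+1}}\alpha\,ds}\omega(X_{n+1})\Big]\le \eta\gamma\,\mathbb{E}_{x}^{\pi^{1},\pi^{2}}\!\Big[e^{-\int_{0}^{T_{n}}\alpha\,ds}\omega(X_{n})\Big].
\]

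Finally, since $X_{0}=x$ almost surely under $\mathbb{P}_{x}^{\pi^{1},\pi^{2}}$ and $T_{0}=0$, the base case gives $\mathbb{E}_{x}^{\pi^{1},\pi^{2}}[\omega(X_{0})]=\omega(x)$, and iteration closes the induction. I expect the only subtle point to be the conditioning argument: one must verify carefully that, on the event $\{n<\infty\}$, the integrand over $[T_{n},T_{n+1})$ factors properly as $e^{-\int_{0}^{T_{n}}\alpha\,ds}\cdot e^{-\alpha(X_{n},A_{n},B_{n})(T_{n+1}-T_{n})}$, which uses the piecewise-constant nature of $(X(s),A(s),B(s))$ between jumps (i.e.\ the construction in Definition~\ref{defn3}) together with the tower property applied to the canonical filtration associated with $\{T_{n},X_{n},A_{n},B_{n}\}$.
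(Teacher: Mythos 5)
Your proposal is correct and follows essentially the same route as the paper: reduce to $\omega$ via $|u|\le\|u\|_{\omega}\omega$, then prove the one-step contraction $\mathbb{E}_{x}^{\pi^{1},\pi^{2}}\big[e^{-\int_{0}^{T_{n+1}}\alpha\,ds}\omega(X_{n+1})\big]\le \eta\gamma\,\mathbb{E}_{x}^{\pi^{1},\pi^{2}}\big[e^{-\int_{0}^{T_{n}}\alpha\,ds}\omega(X_{n})\big]$ by conditioning on $(h_{n},a_{n},b_{n})$, using the piecewise-constant trajectory on the sojourn interval and the bound (\ref{equ30}) with $u=\omega$, and iterate. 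The paper's proof is exactly this argument (written as a descent from $n$ to $n-1$ rather than an induction from $n$ to $n+1$), so no further comment is needed.
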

\begin{proof}
    For $\forall n\ge1$ and $x\in X$, we have
    \begin{align*}
    &\bigg|\mathbb{E}_{x}^{\pi^{1}, \pi^{2}}\Big[e^{-\int_{0}^{T_{n}}\alpha(X(s),A(s),B(s))ds}\omega(X_{n})\Big]\bigg|\\
    &= \bigg|\mathbb{E}_{x}^{\pi^{1},\pi^{2}}\Big[\mathbb{E}_{x}^{\pi^{1}, \pi^{2}}\big[e^{-\int_{0}^{T_{n}}\alpha(X(s),A(s),B(s))ds}\omega(X_{n})|h_{n-1},A_{n-1},B_{n-1}\big]\Big]\bigg|\\
    &= \bigg|\mathbb{E}_{x}^{\pi^{1},\pi^{2}}\Big[e^{-\int_{0}^{T_{n-1}}\alpha(X(s),A(s),B(s))ds}\mathbb{E}_{x}^{\pi^{1}, \pi^{2}}\big[e^{-\int_{T_{n-1}}^{T_{n}}\alpha(X(s),A(s),B(s))ds}\omega(X_{n})|h_{n-1},A_{n-1},B_{n-1}\big]\Big]\bigg|\\
    &=
    \bigg|\mathbb{E}_{x}^{\pi^{1},\pi^{2}}\Big[e^{-\int_{0}^{T_{n-1}}\alpha(X(s),A(s),B(s))ds}\big[\int_{0}^{\infty}e^{-\alpha(X_{n-1},A_{n-1},B_{n-1})t}\int_{X}\omega(y)Q(d t,dy| X_{n-1}, A_{n-1},B_{n-1})\big]\Big]\bigg|\\
    &\leq
    \eta\gamma\bigg|\mathbb{E}_{x}^{\pi^{1},\pi^{2}}\Big[e^{-\int_{0}^{T_{n-1}}\alpha(X(s),A(s),B(s))ds}\omega(X_{n-1})\Big]\bigg|,
    \end{align*}
    where the first and second equalities are ensured by the property of conditional expectation. The last
    inequality follows from formulation (\ref{equ30}).
    Through iteration we have
    \begin{align*}
    \bigg|\mathbb{E}_{x}^{\pi^{1}, \pi^{2}}\Big[e^{-\int_{0}^{T_{n}}\alpha(X(s),A(s),B(s))ds}u(X_{n})\Big]\bigg|&\leq \|u\|_{\omega}\bigg|\mathbb{E}_{x}^{\pi^{1}, \pi^{2}}\Big[e^{-\int_{0}^{T_{n}}\alpha(X(s),A(s),B(s))ds}\omega(X_{n})\Big]\bigg|\\
    &\leq (\eta\gamma)^n\|u\|_{\omega}\omega(x),
    \end{align*}
    which yields Lemma~\ref{lem5}.
\end{proof}
Next, we present our main results.

\begin{thm}\label{thm1}
    Suppose that Assumptions~\ref{ass1}-\ref{ass4} hold, then

    (a) The semi-Markov game has a value function $V^{*}(\cdot)$, which is the unique function in $B_{\omega}(X)$ that satisfies the Shapley equation
    $$V^{*}(x)=TV^{*}(x),\quad\forall x\in X,$$
    and furthermore, there exists a pair of optimal strategies.

    (b) A pair of stationary strategies $(f^{*}, g^{*}) \in \Phi_{1} \times \Phi_{2}$ is optimal if and only if its expected payoff satisfies the Shapley equation $TV(x,f^{*},g^{*})=V(x,f^{*},g^{*})$ for all $x\in X$.
\end{thm}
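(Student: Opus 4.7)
\textbf{Proof proposal for Theorem~\ref{thm1}.}

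For part (a), the plan is to apply Banach's fixed-point theorem to the contraction $T$ on $B_{\omega}(X)$ established in Lemma~\ref{lem3}, producing the unique $V^{*}\in B_{\omega}(X)$ with $TV^{*}=V^{*}$. Lemma~\ref{lem2} then supplies a stationary pair $(f,g)\in\Phi_{1}\times\Phi_{2}$ realizing the saddle of $G(V^{*},x,\cdot,\cdot)$, so that $V^{*}(x)=G(V^{*},x,f(x),g(x))=\max_{\mu}G(V^{*},x,\mu,g(x))=\min_{\lambda}G(V^{*},x,f(x),\lambda)$, in particular $T(f,g)V^{*}=V^{*}$. Banach uniqueness applied to the contraction $T(f,g)$ (Lemma~\ref{lem3}) then forces $V(x,f,g)=V^{*}(x)$.

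Next I would verify the saddle inequality $V(x,\pi^{1},g)\leq V^{*}(x)\leq V(x,f,\pi^{2})$ for every $(\pi^{1},\pi^{2})\in\Pi_{1}\times\Pi_{2}$; by symmetry only the left inequality needs a separate argument. The ``$\max$ attained at $f$'' property gives $G(V^{*},x,\mu,g(x))\leq V^{*}(x)$ for all $\mu\in\mathbb{A}(x)$, so $T(\pi_{0}^{1},g)V^{*}\leq V^{*}$ pointwise. Iterating this $n$ times along the dynamics under $(\pi^{1},g)$ and simultaneously iterating Lemma~\ref{lem4} $n$ times to express $V(x,\pi^{1},g)$ as the expected discounted reward on $[0,T_{n}]$ plus the tail $\mathbb{E}_{x}^{\pi^{1},g}[e^{-\int_{0}^{T_{n}}\alpha(X(s),A(s),B(s))ds}V(X_{n},{}^{(n)}\!\pi^{1},g)]$, subtraction yields
\begin{equation*}
V^{*}(x)-V(x,\pi^{1},g)\geq\mathbb{E}_{x}^{\pi^{1},g}\Big[e^{-\int_{0}^{T_{n}}\alpha(X(s),A(s),B(s))ds}\bigl(V^{*}(X_{n})-V(X_{n},{}^{(n)}\!\pi^{1},g)\bigr)\Big].
\end{equation*}
Lemma~\ref{lem5} together with a uniform $\omega$-bound on $V(\cdot,{}^{(n)}\!\pi^{1},g)$ drives the right-hand side to $0$ as $n\to\infty$, giving $V(x,\pi^{1},g)\leq V^{*}(x)$. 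Combining with the analogous right-inequality and Definition~\ref{defn5} yields $U(x)\leq V^{*}(x)\leq L(x)$, and the trivial $L\leq U$ forces $L=U=V^{*}$, so $V^{*}$ is the value function and $(f,g)$ is a pair of optimal stationary strategies.

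For part (b), the ($\Rightarrow$) direction is immediate from Definition~\ref{defn6}/Remark~\ref{re2} and (a): optimality gives $V(x,f^{*},g^{*})=V^{*}(x)$, hence $TV(x,f^{*},g^{*})=TV^{*}(x)=V^{*}(x)=V(x,f^{*},g^{*})$. For ($\Leftarrow$), the hypothesis implies $V(\cdot,f^{*},g^{*})$ is a fixed point of $T$, so Banach uniqueness from (a) forces $V(x,f^{*},g^{*})=V^{*}(x)$; combined with the stationary identity $T(f^{*},g^{*})V(\cdot,f^{*},g^{*})=V(\cdot,f^{*},g^{*})$ from Lemma~\ref{lem4}, one obtains $G(V^{*},x,f^{*}(x),g^{*}(x))=V^{*}(x)=\sup_{\mu}\inf_{\lambda}G(V^{*},x,\mu,\lambda)$, identifying $(f^{*}(x),g^{*}(x))$ as a saddle of $G(V^{*},x,\cdot,\cdot)$. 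Rerunning the iteration argument of (a) with $(f^{*},g^{*})$ in place of $(f,g)$ then establishes $V(x,\pi^{1},g^{*})\leq V^{*}(x)\leq V(x,f^{*},\pi^{2})$ for all $\pi^{1},\pi^{2}$, so $(f^{*},g^{*})$ is optimal by Remark~\ref{re2}.

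The main obstacle is the tail-vanishing step in (a): to apply Lemma~\ref{lem5} with $u=V(\cdot,{}^{(n)}\!\pi^{1},g)$ one needs a uniform $\omega$-bound for this family independent of $n$ and $\pi^{1}$. I would derive $|V(x,\pi^{1},\pi^{2})|\leq\frac{M}{\alpha_{0}(1-\eta\gamma)}\omega(x)$ directly from Assumption~\ref{ass2} and the drift bound~(\ref{equ30}) applied recursively along the trajectory. Once this uniform bound is in hand, the limit-expectation exchange in the iteration argument is routine, and the same machinery powers both (a) and the ($\Leftarrow$) direction of (b).
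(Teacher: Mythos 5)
Your part (a) and the ($\Rightarrow$) half of part (b) follow essentially the paper's route: Banach's theorem for $T$ (Lemma~\ref{lem3}), Lemma~\ref{lem2} to extract a stationary pair satisfying (\ref{equ7}), Lemma~\ref{lem4} plus uniqueness of the fixed point of $T(f,g)$ to identify $V^{*}$ with $V(\cdot,f,g)$, and a telescoping argument closed by Lemma~\ref{lem5}. The one divergence in (a) is cosmetic: the paper telescopes the one-step inequality directly, so the surviving tail is $\mathbb{E}_{x}^{f,\pi^{2}}\big[e^{-\int_{0}^{T_{N+1}}\alpha\,ds}u^{*}(X_{N+1})\big]$ with $u^{*}\in B_{\omega}(X)$ already known, whereas your subtraction form additionally needs the uniform bound $|V(x,\pi^{1},\pi^{2})|\leq\frac{M}{\alpha_{0}(1-\eta\gamma)}\omega(x)$; that bound is correct and follows exactly as you indicate, so this is a harmless detour. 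Your ($\Rightarrow$) for (b) is in fact cleaner than the paper's (which perturbs $g^{*}$ at time $0$ and invokes Lemma~\ref{lem4}); deducing $V(\cdot,f^{*},g^{*})=V^{*}$ from optimality and then applying $TV^{*}=V^{*}$ is valid.

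The genuine gap is in ($\Leftarrow$) of part (b). From the hypothesis and uniqueness you correctly get $V(\cdot,f^{*},g^{*})=V^{*}$, and Lemma~\ref{lem4} gives $G(V^{*},x,f^{*}(x),g^{*}(x))=V^{*}(x)=\sup_{\mu}\inf_{\lambda}G(V^{*},x,\mu,\lambda)$. But the step ``identifying $(f^{*}(x),g^{*}(x))$ as a saddle of $G(V^{*},x,\cdot,\cdot)$'' is a non sequitur: a mixed-strategy pair at which a matrix game merely \emph{attains its value} need not be a saddle point. Concretely, take a one-state game whose stage payoff is matching pennies, $r=\bigl(\begin{smallmatrix}1&-1\\-1&1\end{smallmatrix}\bigr)$, with constant $\alpha$ and exponential sojourn, and set $f^{*}(x)=(1/2,1/2)$, $g^{*}(x)=(1,0)$. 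Then $V^{*}\equiv 0$, $V(x,f^{*},g^{*})=0$, and $TV(x,f^{*},g^{*})=T0(x)=0$, so the hypothesis of ($\Leftarrow$) holds; yet $\max_{\mu}G(V^{*},x,\mu,g^{*}(x))>0$, P1 can exploit $g^{*}$ by always playing the first row to earn $1/\alpha>0$, and $(f^{*},g^{*})$ is not optimal in the sense of Definition~\ref{defn6}/Remark~\ref{re2}. So the missing saddle property is not merely unproved — it can genuinely fail, and without it the inequality $T(\pi_{0}^{1},g^{*})V^{*}\leq V^{*}$ that powers the iteration of (a) is unavailable, so ``rerunning the argument of (a)'' collapses.

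For fairness: the paper's own proof of this direction is the single sentence ``this has been proved in part (a),'' which silently assumes exactly the saddle property you tried to derive, so you have not overlooked an argument the paper actually supplies. But as written your deduction does not close the gap; the implication needs either the stronger hypothesis that $(f^{*},g^{*})$ satisfies (\ref{equ7}) with $u=V(\cdot,f^{*},g^{*})$ (equivalently, that each of $V(\cdot,f^{*},g^{*})=\min_{\lambda}T(f^{*},\lambda)V(\cdot,f^{*},g^{*})$ and $V(\cdot,f^{*},g^{*})=\max_{\mu}T(\mu,g^{*})V(\cdot,f^{*},g^{*})$ holds, which is what the ($\Rightarrow$) computation actually delivers), or a reformulation of the statement.
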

\begin{proof}
    (a) Let $u^{*}$ be the unique fixed point of $T$ in $B_{\omega}(X)$, that is
    $$u^{*}(x)=Tu^{*}(x),~~~~\forall x\in X.$$
    By Lemma~\ref{lem2}, there exists a pair of stationary strategies $(f_{1}^{*}, g_{1}^{*}) \in \Phi_{1} \times \Phi_{2}$ such that for each $x\in X$,
    \begin{equation}\label{equ9}
    \begin{aligned}
    Tu^{*}(x) &=G(u^{*}, x, f_{1}^{*}(x), g_{1}^{*}(x)) \\
    &=\max _{\mu \in \mathbb{A}(x)} G(u^{*}, x, \mu, g_{1}^{*}(x)) \\
    &=\min _{\lambda \in \mathbb{B}(x)} G(u^{*}, x, f_{1}^{*}(x), \lambda) ,
    \end{aligned}
    \end{equation}
    which implies that $$u^{*}(x)=G(u^{*}, x, f_{1}^{*}(x),
    g_{1}^{*}(x))=T(f_{1}^{*},g_{1}^{*})u^{*}(x),\quad \forall x\in X.$$ Moreover, by
    Lemma~\ref{lem4},
    $$V(x,f_{1}^{*},g_{1}^{*})=T(f_{1}^{*},g_{1}^{*})V(x,f_{1}^{*},g_{1}^{*}),\quad \forall x\in X,$$
    from which we can derive $$u^{*}(x)=V(x,f_{1}^{*},g_{1}^{*}),\quad \forall x\in X.$$
    Next, we prove that $u^{*}$ is the value function of the game and
    $(f_{1}^{*},g_{1}^{*})$ is a pair of optimal strategies, that is
    \begin{equation}\label{equ10}
    V(x,f_{1}^{*},\pi^{2})\geq V(x,f_{1}^{*},g_{1}^{*})\geq
    V(x,\pi^{1},g_{1}^{*})~,~~~\forall (\pi^{1},\pi^{2})\in \Pi_{1}
    \times \Pi_{2},
    \end{equation}

    We first prove the first inequality in (\ref{equ10}). Then a similar
    proof can follow for the second inequality. By (\ref{equ9}), we
    have
    $$u^{*}(x)\leq G(u^{*},x,f_{1}^{*}(x),\lambda)~,~~~\forall
    \lambda \in \mathbb{B}(x).$$
    Particularly, let $\lambda$ be an
    indicator function such that $\lambda (db)=1$. Then for each $b\in
    B(x)$, we have
    \begin{equation*}
    \begin{aligned}
    u^{*}(x)&\leq \int_{A(x)}\left\{r(x,a,b)\int_{0}^{\infty}e^{-\alpha(x,a,b)t}\Big[1-H(t| x, a, b)\Big]dt+\right.\\
    &\phantom{=\;\;}\left.\int_{0}^{\infty}e^{-\alpha(x,a,b)t}\Big [\int_{X}u^{*}(y)Q(d t,dy| x, a, b)\Big] \right\}f_{1}^{*}(da|x),
    \end{aligned}
    \end{equation*}
    taking $x$ as a random variable $X_{n}$, then for all $b_{n}\in
    B(X_{n})$, we have
    \begin{equation*}
    \begin{aligned}
    u^{*}(X_{n})&\leq \int_{A(X_{n})}\left\{r(X_{n},a_{n},b_{n})\int_{0}^{\infty}e^{-\alpha(X_{n},a_{n},b_{n})t}\Big[1-H(t| X_{n},a_{n},b_{n})\Big]dt+\right.\\
    &\phantom{=\;\;}\left.\int_{0}^{\infty}e^{-\alpha(X_{n},a_{n},b_{n})t}\Big
    [\int_{X}u^{*}(y)Q(d t,dy|X_{n},a_{n},b_{n})\Big]
    \right\}f_{1}^{*}(da_{n}|X_{n}).
    \end{aligned}
    \end{equation*}
    For $\forall \pi^2\in \Pi_{2}$, integrating $b_{n}$ on both sides in the above inequality, we have
    \begin{equation*}
    \begin{aligned}
    u^{*}(X_{n})&\leq \int_{B(X_{n})}\int_{A(X_{n})}\left\{\int_{0}^{\infty}e^{-\alpha(X_{n},a_{n},b_{n})t}\Big [\int_{X}u^{*}(y)Q(dt,dy|X_{n},a_{n},b_{n})\Big]+\right. \\
    &\phantom{=\;\;}\left.r(X_{n},a_{n},b_{n})\int_{0}^{\infty}e^{-\alpha(X_{n},a_{n},b_{n})t}\Big[1-H(t| X_{n},a_{n},b_{n})\Big]dt \right\}f_{1}^{*}(da_{n}|X_{n})\pi_{n}^{2}(db_{n}|h_{n})\\
    &=\mathbb{E}_{x}^{f_{1}^{*},
        \pi^{2}}\Big[e^{-\int_{T_{n}}^{T_{n+1}}\alpha(X(s),A(s),B(s))ds}u^{*}(X_{n+1})|h_{n}\Big]+\\
    &~~~~\mathbb{E}_{x}^{f_{1}^{*}, \pi^{2}}\Big[\int_{T_{n}}^{T_{n+1}}
    e^{-\int_{T_{n}}^{t}\alpha(X(s),A(s),B(s))ds} r(X(t), A(t),
    B(t))dt|h_{n}\Big].
    \end{aligned}
    \end{equation*}
    Multiplying $e^{-\int_{0}^{T_{n}}\alpha (x(s))ds}$ on both sides in
    the above inequality and using the properties of the conditional
    expectation, we have
    \begin{equation*}
    \begin{aligned}
    e^{-\int_{0}^{T_{n}}\alpha(X(s),A(s),B(s))ds}u^{*}(X_{n})&\leq
    \mathbb{E}_{x}^{f_{1}^{*}, \pi^{2}}\Big[e^{-\int_{0}^{T_{n+1}}\alpha(X(s),A(s),B(s))ds}u^{*}(X_{n+1})|h_{n}\Big]+\\
    &\mathbb{E}_{x}^{f_{1}^{*}, \pi^{2}}\Big[\int_{T_{n}}^{T_{n+1}}
    e^{-\int_{0}^{t}\alpha(X(s),A(s),B(s))ds} r(X(t), A(t), B(t))dt|h_{n}\Big].
    \end{aligned}
    \end{equation*}
    Then, taking the expectation $\mathbb{E}_{x}^{f_{1}^*, \pi^{2}}$, we
    have
    \begin{equation*}
    \begin{aligned}
    \mathbb{E}_{x}^{f_{1}^{*}, \pi^{2}}\Big[e^{-\int_{0}^{T_{n}}\alpha(X(s),A(s),B(s))ds}u^{*}(X_{n})\Big]&\leq
    \mathbb{E}_{x}^{f_{1}^{*}, \pi^{2}}\Big[e^{-\int_{0}^{T_{n+1}}\alpha(X(s),A(s),B(s))ds}u^{*}(X_{n+1})\Big] \\
    &+\mathbb{E}_{x}^{f_{1}^{*}, \pi^{2}}\Big[\int_{T_{n}}^{T_{n+1}} e^{-\int_{0}^{t}\alpha(X(s),A(s),B(s))ds}
    r(X(t), A(t), B(t))dt\Big].
    \end{aligned}
    \end{equation*}
    Now, summing over $n=0,1,2,\dots,N$, we obtain
    \begin{align*}
    u^*(x)&\leq \mathbb{E}_{x}^{f_{1}^{*}, \pi^{2}}\Big[\int_{0}^{T_{N+1}} e^{-\int_{0}^{t}\alpha(X(s),A(s),B(s))ds}r(X(t), A(t), B(t))dt\Big]\\
    &+\mathbb{E}_{x}^{f_{1}^{*}, \pi^{2}}\Big[e^{-\int_{0}^{T_{N+1}}\alpha(X(s),A(s),B(s))ds}u^{*}(X_{N+1})\Big].
    \end{align*}
    Letting $N\rightarrow+\infty$, according to Lemma~\ref{lem5}, we derive
    $$u^*(x)\leq \mathbb{E}_{x}^{f_{1}^{*}, \pi^{2}}\Big[\int_{0}^{\infty} e^{-\int_{0}^{t}\alpha(X(s),A(s),B(s))ds}r(X(t), A(t), B(t))dt\Big],$$
    which means that the first inequality in (\ref{equ10}) holds.

    (b) ($\Rightarrow$)

    Suppose that $(f^{*}, g^{*}) \in \Phi_{1} \times \Phi_{2}$ is a pair of  optimal stationary strategies, then for each $x\in X,\pi^{1} \in  \Pi_{1},\pi^{2} \in  \Pi_{2}$, we have
    $$V(x,f^{*},\pi^{2})\geq V(x,f^{*},g^{*})\geq V(x,\pi^{1},g^{*}).$$
    For each fixed $\lambda \in  \mathbb{B}(x)$, let
    $\pi^{2}=\{\pi_{n}^{2},n\geq 0\}$ with $\pi_{0}^{2}=\lambda$ and
    $\pi_{n}^{2}=g^{*}, n\geq 1$, then by Lemma~\ref{lem4}, for each
    $x\in X$, we have
    \begin{equation*}
    V(x,f^{*},g^{*}) \leq V(x,f^{*},\pi^{2})=T(f^{*},\lambda)V(x,f^{*},g^{*}),
    \end{equation*}
    which yields
 \begin{equation*}
V(x,f^{*},g^{*})\leq \min _{\lambda \in \mathbb{B}(x)} T(f^{*},\lambda)V(x,f^{*},g^{*})\leq TV(x,f^{*},g^{*}).
\end{equation*}
    Similarly, we can prove
    $$V(x,f^{*},g^{*})\geq TV(x,f^{*},g^{*}). $$
    Combining the last two inequalities, we obtain the desired result.

    ($\Leftarrow$)

     This part holds, which has been proved in part $(a)$.
\end{proof}

\section{Algorithm}\label{sec4}
In this section, we develop an iterative algorithm to approach to
the value function and Nash equilibrium of our two-person zero-sum
stochastic SMG, where numerically solving matrix games is
iteratively utilized at every state in a form of value iteration.
First, we introduce some concepts about matrix games
\citep{Barron2013}.

A two-person zero-sum static game in a matrix form means that there
is a matrix ${A}=(a_{ij})_{m \times l}$ of real numbers so that if P1,
the row player chooses to play row $i$, while P2, the column player
chooses to play column $j$, then the payoff to P1 is $a_{ij}$ and
the payoff to P2 is $-a_{ij}$. Every row and column represents a
pure strategy adopted by P1 and P2, respectively. Both players aim to choose strategies that maximize their individual
payoffs. To guarantee the
optimality, we have to consider mixed strategies, where a player
chooses a row or column according to some probability distributions.

\begin{definition}\label{defn7}
    A mixed strategy is a vector  $X=(x_{1},x_{2},\dots,x_{m})$ for P1,
    and $Y=(y_{1},y_{2},\dots,y_{l})$  for P2, where
    $$x_{i}\ge 0, \sum_{i=1}^{m}x_{i}=1~~~ and~~~ y_{j}\ge 0, \sum_{j=1}^{l}y_{j}=1. $$
    The components $x_{i}$ and $y_{j}$ represent the probabilities that row $i$ will
    be chosen by P1 and column
    $j$ will be chosen by P2, respectively. Denote the set of mixed strategies with
    $k$ components by
    $$S_{k}=\{(z_{1},z_{2},\dots,z_{k})~|~z_{i}\ge 0,\sum_{i=1}^{k}z_{i}=1\},\quad k=1,2,\dots.$$
\end{definition}
\begin{definition}\label{defn8}
    Let $X=(x_{1},x_{2},\dots,x_{m})$ be a mixed strategy for P1, and
    $Y=(y_{1},y_{2},\dots,y_{l})$ be a mixed strategy for P2, then the
    expected payoff to P1 is
    \begin{equation*}
    E(X,Y)=XAY^{T}.
    \end{equation*}
    In a two-person zero-sum game, the expected payoff to P2 is
    $-E(X,Y)$.
\end{definition}
Both players aim to choose strategies that maximize their individual
payoffs. P1 wants to choose a strategy to maximize the payoff in the
matrix, while P2 wants to choose a strategy to minimize the payoff
in the matrix.
\begin{definition}\label{defn9}
    The upper and lower values of the matrix game are defined as
    $$v^{+}=\inf_{Y\in S_{l}} \sup_{X\in S_{m}} E(X,Y)~~~and~~~v^{-}=\sup_{X\in S_{m}} \inf_{Y\in S_{l}}E(X,Y).$$
    If $v^{+}=v^{-}$, then the common value is called the value of the
    game and denoted by $v^{*}$ .

    Moreover, a saddle point in mixed strategies is a pair $(X^*,Y^*)\in S_{m}\times S_{l}$, which satisfies
    $$E(X,Y^*)\leq E(X^*,Y^*)\leq E(X^*,Y),\quad \forall X\in S_{m},Y\in S_{l}.$$
\end{definition}

By Theorem $1.3.4$ in  \cite{Barron2013}, we know that any matrix
game has a unique value as well as at least one saddle point. There
is a method of formulating the matrix game as a linear program as
follows  \citep{Barron2013}:

P1 aims to choose a mixed strategy
$X^*=(x_{1}^*,x_{2}^*,\dots,x_{m}^*)$ to maximize the payoff
\begin{equation}\label{equ11}
\left\{
\begin{array}{lr}
\max~~v &  \\
\mbox{subject to } \\
\sum_{i=1}^{m}a_{ij}x_{i}^*\ge v, \quad j=1,2,\dots, l\\
\sum_{i=1}^{m}x_{i}^*=1 \\
x_{i}^*\ge 0, \quad i=1,2,\dots,m.
\end{array}
\right.
\end{equation}
P2 aims to choose a mixed strategy
$Y^*=(y_{1}^*,y_{2}^*,\dots,y_{l}^*)$ to minimize the payoff
\begin{align}\label{equ12}
\left\{
\begin{array}{lr}
\min~~v &  \\
\mbox{subject to } \\
\sum_{j=1}^{l}a_{ij}y_{j}^*\leq v, \quad i=1,2,\dots,m\\
\sum_{j=1}^{l}y_{j}^*=1 \\
y_{j}^*\ge 0, \quad j=1,2,\dots,l.
\end{array}
\right.
\end{align}

We can use the classic algorithms to solve the two linear programs
(\ref{equ11}) and (\ref{equ12}), such as simplex method or interior
point method. Note that the optimal values of $v$ solved by
(\ref{equ11}) and (\ref{equ12}) are always equal. Therefore, the
optimal strategies of P1 and P2 and the value of the game can be
obtained in a straightforward way.

Next, we utilize the above technique of solving matrix games to
study the computation of two-person zero-sum stochastic SMGs, where a value iteration-type algorithm is developed to
approach to the value function $V^*$ and Nash equilibrium
$(\pi^{1}_{*},\pi^{2}_{*})$. To this end, we need to introduce the
following concept.

\begin{definition}\label{defn10}
    Assume that the SMG has a value function $V^{*}$. Then a  pair of strategies
    $(\pi^{1}_{\varepsilon},\pi^{2}_{\varepsilon})\in \Pi_{1}\times
    \Pi_{2}$ is said to be an $\varepsilon$-Nash equilibrium of the game
    if $$\| V(\cdot, \pi^{1}_{\varepsilon},
    \pi^{2}_{\varepsilon})-V^{*}(\cdot)\|_{\omega}<\varepsilon.$$ Moreover,
    $V_{\varepsilon}(\cdot):=V(\cdot, \pi^{1}_{\varepsilon},
    \pi^{2}_{\varepsilon})$ is called the $\varepsilon$-value function of the
    game.
\end{definition}

Consider the mathematical model of SMG discussed in this paper. In
order to numerically approach to the value function and Nash
equilibrium, we simplify the general state and action spaces as
finite case for convenience. Without loss of generality, we assume
that $A(x):=\{a_{1},a_{2},\dots,a_{m}\}$ and
$B(x):=\{b_{1},b_{2},\dots,b_{l}\}$, for any $x\in X
:=\{x_{0},x_{1},\dots,x_{n-1}\}$. Under
Assumptions~\ref{ass1}-\ref{ass4} mentioned in Section~\ref{sec3},
we obtain the Shapley equation as follows
\begin{align}\label{equ13}
V^*(x)=TV^*(x) &=\min _{g \in \Phi_{2}} \max _{f \in \Phi_{1}} G(V^*,x, f, g)\nonumber \\
&=\min _{g \in \Phi_{2}} \max _{f \in \Phi_{1}}\sum_{i=1}^{m} \sum_{j=1}^{l} G(V^*,x, a_{i}, b_{j})f(x,i)g(x,j)\nonumber\\
&=\max _{f \in \Phi_{1}}\min _{g \in \Phi_{2}} \sum_{i=1}^{m} \sum_{j=1}^{l} G(V^*,x, a_{i}, b_{j})f(x,i)g(x,j),\quad\forall x\in X.
\end{align}

For each fixed $x\in X$ and given function $u\in B_{\omega}(X)$, let
$C(u,x)$ be an $m\times l$-dimensional matrix with elements defined
as
$$c(u,x)_{ij}:=G(u,x, a_{i}, b_{j}), \quad i=1,2,\dots,m; \ j=1,2,\dots,l,$$
where $G(u,x, a_{i}, b_{j})$ is defined in  (\ref{equ4}). We further define $f(x):=(f(x,1),f(x,2),\dots,f(x,m))$ as an
$m$-dimensional vector and $g(x):=(g(x,1),g(x,2),\dots,g(x,l))$ as
an $l$-dimensional vector, which are all mixed strategies. According
to (\ref{equ13}), we have
\begin{equation}\label{equ14}
V^*(x) = \min _{g \in \Phi_{2}} \max _{f \in
    \Phi_{1}}f(x)C(V^*,x)g(x)^T=\max _{f \in \Phi_{1}}\min _{g \in
    \Phi_{2}}f(x)C(V^*,x)g(x)^T,
\end{equation}
which can be viewed as a matrix game for the value function $V^*$ at
each state $x \in X$.

However, we cannot directly solve (\ref{equ14}) since the value
function $V^*$ is unknown. Below, we develop Algorithm~\ref{algo1}
to iteratively compute a series of matrix games whose values can
asymptotically approach to $V^*(x)$ at each state $x$. From the
lines 11-12 of Algorithm~\ref{algo1}, we can see that at the $n$th
iteration, we can obtain $V_n(x)$ and $(f_{n}(x),g_{n}(x))$ by using
linear programming (\ref{equ11}) and (\ref{equ12}) to solve the game
with matrix $C(V_{n-1},x)$ whose element is
$c(V_{n-1},x)_{ij}:=G(V_{n-1},x, a_{i}, b_{j})$, where
$n=1,2,\dots$, $i=1,2,\dots,m$, $j=1,2,\dots,l$, and $x \in X$. This
iterative procedure of computing a series of $V_n$ is similar to the
classic value iteration algorithm in the MDP theory.  Furthermore, we
give a theorem (Theorem~\ref{thm2}) to prove the convergence of Algorithm~\ref{algo1}.


\begin{algorithm}[!h]
    \caption{Value iteration-type algorithm to solve the two-person zero-sum SMG}\label{algo1}
    {\bfseries Algorithm parameter:} a small threshold $\epsilon>0$ determining accuracy of estimation; model parameters $\theta,\delta$ given by Assumption~\ref{ass1}, $\alpha_{0}$ given by Assumption~\ref{ass2}(a), and $\eta$ given by Assumption~\ref{ass3}, with $\gamma=1-\delta+\delta e^{-\alpha_{0}\theta}$ and $\varepsilon=\frac{\epsilon}{1-\eta\gamma}$; a measurable function $\omega:X\rightarrow [1,\infty)$ given by Assumption~\ref{ass2}(b)\\
    {\bfseries Initialize:} $V(x) \in \mathbb{R}$  for all $x\in X$ arbitrarily\\
    \Repeat{ $\Delta<\epsilon$}
    {
        $\Delta \gets 0$\\
        {\bfseries Loop} \For{each $x\in X$}
        {
            $v \gets V(x)$\\
            \For{ $i=1;i<m;i++$}
            {
                \For{$j=1;i<l;j++$}
                {
                    $c(v,x)_{ij} \gets G(v,x,a_{i},b_{j})$\
                }
            }
            Solving the game with matrix $C(v,x)$\\
            $V(x) \gets \max\limits_{f\in S_{m}}\min\limits_{g\in S_{l}}fC(v,x)g$   \\
            $(f(x),g(x)) \gets \arg\max\limits_{f\in S_{m}}\min\limits_{g\in S_{l}}fC(v,x)g$\\
            $\Delta \gets \max\{\Delta,\frac{|v-V(x)|}{\omega(x)}\}$\\
        }
    }
    {\bfseries Output:}\\
    ~~~~~~~~ $V_{\varepsilon}(x)=  V(x)$ and $(f_{\varepsilon}(x),g_{\varepsilon}(x)) = (f(x),g(x))$\\
\end{algorithm}

\begin{remark}
    For the case where the state and action spaces are both
    countable, we generally choose $\omega(x)=1$ for convenience(see Example~\ref{examp1}). And the line $13$ of Algorithm~\ref{algo1}  is simplified to $\Delta \gets \max\{\Delta,|v-V(x)|\}$.
\end{remark}

\begin{thm}\label{thm2}
    Under Algorithm~\ref{algo1}, for any given $\epsilon>0$ and initial
    value $V_{0}\in \mathbb{R}$, there exists a non-negative integer
    $N_{\epsilon}=\big(1+\lfloor log_{\eta\gamma}
    (\frac{\epsilon}{\|TV_{0}-V_{0}\|_{\omega}}) \rfloor\big)\mathbb
    I_{TV_{0}\neq V_{0}}$ such that
    $\|V_{N_{\epsilon}+1}-V_{N_{\epsilon}}\|_{\omega}<\epsilon$, which
    implies that Algorithm~\ref{algo1} can converge within
    $N_{\epsilon}$ iterations. Moreover, the strategy pair
    $(f_{\varepsilon},g_{\varepsilon})$ output by Algorithm~\ref{algo1}
    is an $\varepsilon$-Nash equilibrium, where
    $\varepsilon=\frac{\epsilon}{1-\eta\gamma}$.
\end{thm}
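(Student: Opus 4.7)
The plan is to treat the theorem as two separate assertions built on the same foundation: the contraction property of both $T$ and every $T(f,g)$ on $B_\omega(X)$ with modulus $\eta\gamma<1$ (Lemma~\ref{lem3}). The critical observation that glues everything together is that one outer pass of Algorithm~\ref{algo1} is precisely one application of the Shapley operator, i.e.\ $V_{n+1}=TV_n$ (since (\ref{equ14}) shows $\max_f\min_g fC(V_n,x)g^T=TV_n(x)$), and moreover, by Lemma~\ref{lem2}, the pair $(f_{n+1},g_{n+1})$ extracted on lines 11--12 satisfies $V_{n+1}=T(f_{n+1},g_{n+1})V_n$. Hence the final output verifies both $V_{N_\epsilon+1}=TV_{N_\epsilon}$ and $V_{N_\epsilon+1}=T(f_\varepsilon,g_\varepsilon)V_{N_\epsilon}$ simultaneously.

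For the iteration-count bound, the plan is a direct application of contraction. Iterating Lemma~\ref{lem3} yields
$$\|V_{n+1}-V_n\|_\omega \leq (\eta\gamma)^n\|TV_0-V_0\|_\omega.$$
If $TV_0=V_0$, then $V_0=V^*$ already, the stopping test passes at $n=0$, and $N_\epsilon=0$ works. Otherwise, I require $(\eta\gamma)^n\|TV_0-V_0\|_\omega<\epsilon$; solving this inequality and flipping its direction because $\log(\eta\gamma)<0$ produces exactly $n>\log_{\eta\gamma}\bigl(\epsilon/\|TV_0-V_0\|_\omega\bigr)$, so the smallest such integer is $N_\epsilon=1+\lfloor\log_{\eta\gamma}(\epsilon/\|TV_0-V_0\|_\omega)\rfloor$, which matches the statement via the indicator $\mathbb{I}_{TV_0\neq V_0}$.

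For the $\varepsilon$-Nash equilibrium claim, let $V_{f,g}:=V(\cdot,f_\varepsilon,g_\varepsilon)$, which by Lemma~\ref{lem4} is the unique fixed point of $T(f_\varepsilon,g_\varepsilon)$. I plan to use the standard fixed-point-to-iterate estimate: if $S$ is a $\rho$-contraction on $B_\omega(X)$ with fixed point $w$, then $\|u-w\|_\omega\leq\|Su-u\|_\omega/(1-\rho)$. Applying this to $S=T$ (fixed point $V^*$) with $u=V_{N_\epsilon}$ gives
$$\|V_{N_\epsilon}-V^*\|_\omega \leq \frac{\|V_{N_\epsilon+1}-V_{N_\epsilon}\|_\omega}{1-\eta\gamma}<\frac{\epsilon}{1-\eta\gamma}=\varepsilon,$$
and applying it to $S=T(f_\varepsilon,g_\varepsilon)$ (fixed point $V_{f,g}$), again with $u=V_{N_\epsilon}$, gives the same bound for $\|V_{N_\epsilon}-V_{f,g}\|_\omega$. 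A triangle inequality through the common anchor $V_{N_\epsilon}$, followed by one more application of the $T(f_\varepsilon,g_\varepsilon)$-contraction to absorb the redundant factor, then yields $\|V_{f,g}-V^*\|_\omega<\varepsilon$, which is exactly the $\varepsilon$-Nash equilibrium condition of Definition~\ref{defn10}.

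The main obstacle I anticipate is the second part, specifically reconciling the ``one-step'' output $V_{N_\epsilon+1}$ (a single application of $T(f_\varepsilon,g_\varepsilon)$ to $V_{N_\epsilon}$) with the genuine infinite-horizon game value $V_{f,g}$ under stationary play. Without the algebraic coincidence that $V_{N_\epsilon+1}$ lies simultaneously on the $T$-orbit and the $T(f_\varepsilon,g_\varepsilon)$-orbit, one would obtain only the crude bound $\tfrac{2\eta\gamma}{1-\eta\gamma}\epsilon$ from two independent triangle splits; the sharp $\tfrac{\epsilon}{1-\eta\gamma}$ bound relies on routing both the ``game-theoretic'' fixed point $V_{f,g}$ and the ``optimality'' fixed point $V^*$ through the same intermediate iterate $V_{N_\epsilon}$ before contracting. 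Verifying that the linear-programming solution on line 11 indeed reproduces the saddle pair of Lemma~\ref{lem2}(\ref{equ7}) is routine once finiteness of the action sets reduces $\mathbb{A}(x),\mathbb{B}(x)$ to $S_m,S_l$.
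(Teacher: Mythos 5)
The first half of your argument (the bound on $N_\epsilon$) is exactly the paper's proof: iterate the contraction of Lemma~\ref{lem3} to get $\|V_{n+1}-V_n\|_\omega\le(\eta\gamma)^n\|TV_0-V_0\|_\omega$ and split on whether $TV_0=V_0$. Your estimate $\|V_{N_\epsilon}-V^*\|_\omega\le\|V_{N_\epsilon+1}-V_{N_\epsilon}\|_\omega/(1-\eta\gamma)<\varepsilon$ is also precisely the paper's second step; at that point the paper simply stops and invokes Definition~\ref{defn10}, implicitly identifying the iterate $V_{N_\epsilon}$ with the value of the game under the output strategies.

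You go further --- correctly observing that Definition~\ref{defn10} is a statement about $V_{f,g}:=V(\cdot,f_\varepsilon,g_\varepsilon)$, the fixed point of $T(f_\varepsilon,g_\varepsilon)$, rather than about the iterate --- and your two fixed-point estimates through the common anchor $V_{N_\epsilon}$ are both valid (the identity $V_{N_\epsilon+1}=T(f_\varepsilon,g_\varepsilon)V_{N_\epsilon}=TV_{N_\epsilon}$ does hold for the saddle pair returned by the linear programs). But the final step fails: the triangle inequality gives $\|V_{f,g}-V^*\|_\omega<2\varepsilon$, and ``one more application of the contraction'' cannot absorb the factor $2$. The best such a pass yields is $\|V_{f,g}-V^*\|_\omega\le\eta\gamma\bigl(\|V_{f,g}-V_{N_\epsilon}\|_\omega+\|V_{N_\epsilon}-V^*\|_\omega\bigr)<2\eta\gamma\varepsilon$, which still exceeds $\varepsilon$ whenever $\eta\gamma>1/2$ (in Example~\ref{examp2} of the paper, $\eta\gamma\approx 0.9997$). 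This is the classical value-iteration phenomenon: the policy extracted from an $\epsilon$-accurate iterate is in general only $\frac{2\eta\gamma}{1-\eta\gamma}\epsilon$-optimal, not $\frac{\epsilon}{1-\eta\gamma}$-optimal. So your route honestly establishes that $(f_\varepsilon,g_\varepsilon)$ is a $2\varepsilon$-Nash equilibrium; to recover the exact constant claimed in the theorem you would have to do what the paper does and read the conclusion as a statement about $V_{N_\epsilon}$ (the returned $\varepsilon$-value function) rather than about $V_{f,g}$.
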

\begin{proof}
    According to the iterative formula of Algorithm~\ref{algo1}, we have
    $$\|V_{n+1}-V_{n}\|_{\omega}=\|TV_{n}-TV_{n-1}\|_{\omega}\leq \eta\gamma \|V_{n}-V_{n-1}\|_{\omega},\quad\forall n\ge 1,$$
    which by iteration yields
    $$\|V_{n+1}-V_{n}\|_{\omega} \leq (\eta\gamma)^{n}\|TV_{0}-V_{0}\|_{\omega},\quad\forall n\ge 0. $$
    For each given $\epsilon>0$ and initial value $V_{0}\in \mathbb{R}$, if $TV_{0}=V_{0}$, choose $N_{\epsilon}=0$, and we have
    \begin{equation}
    \|V_{N_{\epsilon}+1}-V_{N_{\epsilon}}\|_{\omega}=0<\epsilon,
    \nonumber
    \end{equation}
    otherwise, if $TV_{0}\neq V_{0}$, choose $N_{\epsilon}=1+\lfloor log_{\eta\gamma}
    (\frac{\epsilon}{\|TV_{0}-V_{0}\|_{\omega}}) \rfloor$, and we have
    \begin{equation}
    \|V_{N_{\epsilon}+1}-V_{N_{\epsilon}}\|_{\omega}\leq (\eta\gamma)^{N_{\epsilon}}\|TV_{0}-V_{0}\|_{\omega}<\epsilon.
    \nonumber
    \end{equation}
    Combining the two cases above, choose $N_{\epsilon}=\big(1+\lfloor log_{\eta\gamma}
    (\frac{\epsilon}{\|TV_{0}-V_{0}\|_{\omega}}) \rfloor\big)\mathbb I_{TV_{0}\neq V_{0}}$ and we have $\|V_{N_{\epsilon}+1}-V_{N_{\epsilon}}\|_{\omega}<\epsilon$, which implies that Algorithm~\ref{algo1} can converge within
    $N_{\epsilon}$ iterations.

    Moreover, since $V^{*}$ is the unique solution of the Shapley equation, we have
    \begin{equation}
    \|V_{n}-V^*\|_{\omega} \leq \|V_{n+1}-V^*\|_{\omega}+\|V_{n}-V_{n+1}\|_{\omega}\leq \eta\gamma\|V_{n}-V^*\|_{\omega}+\|V_{n}-V_{n+1}\|_{\omega}
    \nonumber
    \end{equation}
    thus,
    \begin{equation}
    \|V_{n}-V^*\|_{\omega} \leq \frac{\|V_{n}-V_{n+1}\|_{\omega}}{1-\eta\gamma},
    \nonumber
    \end{equation}
    taking $n=N_{\epsilon}$, and we have
    \begin{equation}
    \|V_{N_{\epsilon}}-V^*\|_{\omega} < \frac{\epsilon}{1-\eta\gamma}=\varepsilon,
    \nonumber
    \end{equation}
    which implies that $(f_{\varepsilon},g_{\varepsilon})$ is an $\varepsilon$-Nash equilibrium  by Definition~\ref{defn10}.
\end{proof}

Therefore, with Algorithm~\ref{algo1}, we can iteratively approach
to the value function and Nash equilibrium of our SMG problem
through recursively solving linear programming (\ref{equ11}) and
(\ref{equ12}) at each state $x$. Theorem~\ref{thm2} guarantees the
convergence of Algorithm~\ref{algo1}. We can implement
Algorithm~\ref{algo1} with discretization techniques for computers
to solve practical problems, as illustrated in the next section.

\section{Numerical Experiment}\label{sec5}

In this section, we conduct numerical examples to illustrate our
main results derived in Sections~\ref{sec3}\&\ref{sec4}. First, we
give an example to demonstrate that
Assumptions~\ref{ass1}-\ref{ass4} ensuring the existence of the
value function and Nash equilibrium of SMGs are easy to verify in
practice.

\begin{example}\label{examp1}
    Consider a system with a model of SMG which is defined as follows:

    The state space $X:=\{n:n\in \mathbb{N_{+}}\}$ and the action spaces $A=B:=\{n:n\in \mathbb{N_{+}}\}$ with admissible action sets $A(i)=B(i):=\{n:n\in \mathbb{N_{+}}\}$ for each $i\in X$.
    The semi-Markov kernel is given by:
    $$Q(t,j|i, a, b)=\left\{
    \begin{array}{ll}
    (1-e^{-\beta (i,a,b)t})p(j|i,a,b) & {\text{if}\quad i\in \{1,2\}},\\
    {\frac{t}{\beta(i, a, b)} p(j | i, a, b)} & {\text{if}\quad i\geq3, 0 \leq t \leq \beta(i, a, b)}, \\
    {p(j | i, a, b)} & {\text{otherwise}},
    \end{array}
    \right. $$
    where $\beta(i, a, b)$ is a positive constant and $p(\cdot|i,a,b)$ is a probability distribution.
    The payoff function is denoted by $r(i,a,b)$ which is bounded. Moreover, the discount factor is defined as $\alpha(i,a,b):=e^{-\frac{1}{i+a+b}}$.
\end{example}

Now, we verify that the conditions on the existence of a pair of
optimal stationary strategies described in
Assumptions~\ref{ass1}-\ref{ass4} are satisfied in this example. To
this end, we need the following hypothesis:
\begin{assumption}\label{assE}
There exist positive constants $k_1$ and $k_2$ such that for
each $(a,b)\in A\times B$, we have $0<\beta (i,a,b)<k_{1}$ for each
$i\in\{1,2\}$ and $\beta (i,a,b)>k_{2}$ for each $i\geq 3$.
\end{assumption}
With this hypothesis, we directly have the following result.
\begin{proposition}\label{pro1}
    Suppose that Assumption~\ref{assE} holds, then Example~\ref{examp1} satisfies Assumptions~\ref{ass1}-\ref{ass4}, which means the SMG  has a pair of optimal stationary strategies.
\end{proposition}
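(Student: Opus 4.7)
The plan is to verify each of Assumptions~\ref{ass1}--\ref{ass4} in turn, reading the relevant quantities directly from the explicit data of Example~\ref{examp1} and exploiting the uniform bounds on $\beta(i,a,b)$ supplied by Assumption~\ref{assE}. Throughout I would take the weight function $\omega\equiv 1$, which is legitimate because $r$ is assumed bounded and which trivializes all the $\omega$-norm estimates.

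First I would identify the sojourn-time distribution $H(t|i,a,b)=Q(t,X|i,a,b)$ by summing the semi-Markov kernel over the next state: on $i\in\{1,2\}$ this gives $1-e^{-\beta(i,a,b)t}$, and on $i\geq 3$ it gives the piecewise-linear ramp $\min\{t/\beta(i,a,b),1\}$. Picking $\theta:=k_2/2$ and applying Assumption~\ref{assE}, I would obtain $H(\theta|i,a,b)\leq 1/2$ for $i\geq 3$ and $H(\theta|i,a,b)\leq 1-e^{-k_1\theta}$ for $i\in\{1,2\}$, so Assumption~\ref{ass1} holds with $\delta:=\min\{1/2,\,e^{-k_1 k_2/2}\}$. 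Assumption~\ref{ass2}(a) is immediate from $i+a+b\geq 3$, giving $\alpha(i,a,b)\geq e^{-1/3}=:\alpha_0$, and Assumption~\ref{ass2}(b) is immediate from boundedness of $r$ with $M:=\sup|r|$.

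With $\omega\equiv 1$ the drift condition in Assumption~\ref{ass3} collapses to the identity $\int_X Q(t,dy|i,a,b)=H(t|i,a,b)$, so $\eta=1$ works; together with $\gamma=1-\delta+\delta e^{-\alpha_0\theta}<1$ from Lemma~\ref{lem1}, this gives $\eta\gamma<1$. For Assumption~\ref{ass4}(b)--(e), equipping $X$, $A$, $B$ with the discrete topology makes every real-valued function continuous, so the semi-continuity of $r(i,\cdot,\cdot)$, the continuity of $(a,b)\mapsto\int v(y)Q(t,dy|i,a,b)$, the joint continuity of $H(t|\cdot,\cdot,\cdot)$, and the continuity of $\alpha$ all hold without further work.

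The main obstacle is Assumption~\ref{ass4}(a): as literally written, $A(i)=B(i)=\mathbb{N}_+$ is not compact in the discrete topology. I expect to resolve this either by restricting admissible actions at each state to a finite subset of $\mathbb{N}_+$---which is the natural setting in which Algorithm~\ref{algo1} and the numerical experiment of Section~\ref{sec5} actually run---or by passing to a compactifying metric under which the continuity checks above remain valid. Under either interpretation the compactness requirement is met, the four assumptions are verified, and Theorem~\ref{thm1}(a) delivers the claimed pair of optimal stationary strategies.
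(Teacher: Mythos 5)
Your verification of Assumptions~\ref{ass1}--\ref{ass3} is correct and follows essentially the same route as the paper: read off $H$ from the kernel, use Assumption~\ref{assE} to get a uniform bound $H(\theta|\cdot)\le 1-\delta$, take $\omega\equiv 1$ so that the drift condition reduces to $\int_X Q(t,dy|i,a,b)=H(t|i,a,b)$, and bound $\alpha$ below using $i+a+b\ge 3$. Your constants differ from the paper's ($\theta=k_2/2$ and $\delta=\min\{1/2,e^{-k_1k_2/2}\}$ versus the paper's $\theta=\min\{0.9k_2,\ln 10/k_1\}$ and $\delta=0.1$; $\eta=1$ versus the paper's $\eta=(1+\gamma)/(2\gamma)$), but all of these are equally valid since Assumption~\ref{ass3} only requires the inequality $\le$ and $\eta\gamma<1$.

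On Assumption~\ref{ass4}(a) you have put your finger on a genuine weak point --- one that the paper's own proof does not resolve either: it simply asserts ``Since $X$ and $A,B$ are discrete, Assumption~\ref{ass4} holds,'' which ignores that $A(i)=B(i)=\mathbb{N}_+$ is not compact in the discrete topology. Of your two proposed repairs, only the first (restricting each $A(i),B(i)$ to a finite subset, as is implicitly done in Section~\ref{sec5}) actually works without further hypotheses. The second does not come for free: any metric making the countably infinite set $\mathbb{N}_+$ compact must have a non-isolated point, so in that topology the continuity requirements in Assumption~\ref{ass4}(b)--(e) are no longer automatic --- you would need to additionally assume that $r(i,a_n,b)$, $\beta(i,a_n,b)$ and $p(\cdot|i,a_n,b)$ converge appropriately along the compactifying sequence $a_n\to a_\infty$, which Example~\ref{examp1} does not provide. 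So to make the proposition airtight you should either strengthen Assumption~\ref{assE} to include finiteness of the admissible action sets or add such convergence hypotheses; as it stands, both your argument and the paper's leave this one item unproved.
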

\begin{proof}
    Obviously, Assumption~\ref{ass2} holds by choosing
    $\alpha_{0}=\frac{1}{4}$ and $M=\sup\limits_{i,a,b}|r(i,a,b)|$. Since $X$ and $A,B$
    are discrete, Assumption~\ref{ass4} holds. Next we verify
    Assumptions~\ref{ass1}\&\ref{ass3}. According to the semi-Markov
    kernel $Q$, we have
    $$H(t|i, a, b)=\left\{
    \begin{array}{ll}
    1-e^{-\beta (i,a,b)t}& {\text{if}\quad i\in \{1,2\}},\\
    {\frac{t}{\beta(i, a, b)}} & {\text{if}\quad i\geq3, 0 \leq t \leq \beta(i, a, b)}, \\
    {1,} & {\text{otherwise}}.
    \end{array}
    \right. $$

    Let $\delta=0.1$ and
    $\theta=\min\{0.9k_{2},\frac{\ln10}{k_{1}}\}$, we have that

    \noindent if $i\in \{1,2\}$,
    $$H(\theta| x, a, b)=1-e^{-\beta (i,a,b)\theta} \leq 1-e^{-k_{1}\frac{\ln10}{k_{1}}}=1-0.1= 1-\delta,$$

    \noindent if $i\geq 3$,
    $$H(\theta| x, a, b)={\frac{\theta}{\beta(i, a, b)}} \leq \frac{0.9k_{2}}{k_{2}}=0.9= 1-\delta,$$

    \noindent which implies that Assumption~\ref{ass1} holds.

    By Lemma~\ref{lem1}, we derive
    $$\gamma=1-(1-e^{-\frac{1}{4}\theta})\delta=\max\{1-0.1(1-0.1^{\frac{1}{4k_{1}}}),1-0.1(1-e^{0.225k_{2}})\}<1$$

    By choosing $\omega(x)=1$ and $\eta=\frac{1+\gamma}{2\gamma}$, we have
    $\eta>1$ and $0<\eta\gamma<1$. Furthermore, for $\forall (i,a,b)\in K$ and $t\ge 0$, we have
    \begin{equation*}
    \begin{aligned}
    \int_{X} \omega(j) Q(t,dj|i,a,b)&=\sum\limits_{j=1}^{+\infty}\omega(j)H(t|i, a, b)p(j|i,a,b) \\
    &=H(t|i, a, b)\\
    &<\eta\omega(i)H(t|i,a,b),
    \end{aligned}
    \end{equation*}
    which yields (\ref{equ3}).

    Therefore, Assumption~\ref{ass3} is also verified. Hence, the SMG of
    Example~\ref{examp1} has a pair of optimal stationary strategies.
\end{proof}

Next, we give another example about investment problem to
demonstrate the numerical computation of Algorithm~\ref{algo1} to
solve the value function and a pair of optimal stationary strategies
of the game.

\begin{example}\label{examp2}
    Consider an investment problem with three
    states $\{1,2,3\}$, which denotes the benefit, medium and loss
    economy environments, respectively. At each state, the investor will buy some assets while the market-maker will sell. The interest rate depends on the economy environments as well as the number of assets that investor buys and market-maker sells. In state $i\in \{1,2\}$, the investor buys a certain amount of assets from $\{a_{i1},a_{i2}\}$ and the market-maker sells
    from $\{b_{i1},b_{i2}\}$, which leads to
    a payoff $r(i,a,b)$ to the investor and $-r(i,a,b)$ to the market-maker, where $a\in \{a_{i1},a_{i2}\}, b\in \{b_{i1},b_{i2}\}$.
    Then the system moves to a new state $j$ with  probability
    $p(j|i,a,b)$ after staying at state $i$ for a random time which
    follows exponential-distribution with parameter $\beta (i,a,b)$. In
    state $3$, the investor buys a certain amount of assets from
    $\{a_{31},a_{32}\}$ and the market-maker sells from $\{b_{31},b_{32}\}$, which leads to a payoff
    $r(3,a,b)$ to the investor and $-r(3,a,b)$ to the market-maker,
    where $a\in \{a_{31},a_{32}\}, b\in \{b_{31},b_{32}\}$. Then the
    system moves to a new state $j$ with probability $p(j|3,a,b)$ after
    staying at state $3$ for a random time uniformly distributed in
    $[0,\beta(3,a,b)]$ with parameter $\beta (3,a,b)>0$. For this
    system, the decision makers aim to find a pair of optimal
    strategies.
\end{example}

First, we establish a model of SMG for this example as follows.

We set $X=\{1,2,3\}$, $A(i)=\{a_{i1},a_{i2}\}$,
$B(i)=\{b_{i1},b_{i2}\}$ for each $i\in X$ and the semi-Markov
kernel $Q$ is given by:
$$Q(t,j|i, a, b)=\left\{
\begin{array}{ll}
(1-e^{-\beta (i,a,b)t})p(j|i,a,b) & {\text{if}\quad i\in \{1,2\}},\\
{\frac{t}{\beta(i, a, b)} p(j | i, a, b)} & {\text{if}\quad i=3, \ 0 \leq t \leq \beta(i, a, b)}, \\
{p(j | i, a, b)} & {\text {otherwise}},
\end{array}
\right. $$
from which we can obtain
\begin{align}
Q(dt,j|i,a,b)=\left\{
\begin{array}{ll}
p(j|i,a,b)\beta (i,a,b)e^{-\beta(i, a, b)t}dt & {\text{if}\quad i\in \{1,2\}},\\
{\frac{1}{\beta(i, a, b)} p(j | i, a, b)} & {\text{if}\quad i=3, \ 0 \leq t \leq \beta(i, a, b)}, \\
0 & {\text{otherwise}},
\end{array}
\right.
\nonumber
\end{align}
and
$$H(t|i,a,b)=\left\{
\begin{array}{ll}
1-e^{-\beta (i,a,b)t} & {\text{if}\quad i\in \{1,2\}},\\
{\frac{t}{\beta(i, a, b)} }  & {\text{if}\quad i=3, \ 0 \leq t \leq \beta(i, a, b)}, \\
1 & {\text {otherwise}}.
\end{array}
\right. $$

Then by (\ref{equ4}), we have
$$G(u,i, a, b)=\left\{
\begin{array}{ll}
\frac {r(i,a,b)}{\alpha (i,a,b)+\beta (i,a,b)}+\frac {\beta
    (i,a,b)}{\alpha (i,a,b)+\beta (i,a,b)}\sum
\limits_{j=1}^{3}p(j|i,a,b)u(j) & {\text{if}\quad i\in \{1,2\}},\\
\frac {r(3,a,b)}{(\alpha (3,a,b))^{2}\beta (3,a,b)}\Big[\alpha (3,a,b)\beta (3,a,b)-1+e^{-\alpha (3,a,b)\beta (3,a,b)}\Big]\\
+\frac {1-e^{-\alpha (3,a,b)\beta (3,a,b)}}{\alpha (3,a,b)\beta (3,a,b)}\sum \limits_{j=1}^{3}p(j|3,a,b)u(j) & {\text{if}\quad i=3}.
\end{array}
\right. $$

To take numerical calculation for this example, we assume that the
values of model parameters are shown in Table~\ref{tab1}.
\renewcommand{\arraystretch}{1.2}
\begin{table}[!htbp]
    \newcommand{\tabincell}[2]{\begin{tabular}{@{}#1@{}}#2\end{tabular}}
    \centering
    \fontsize{4}{8}\selectfont
    \caption{The values of model parameters}\label{tab1}
    \label{tab:performance_comparison}
    \small
    \begin{tabular}{|p{18mm}<{\centering}|p{8mm}<{\centering}|p{8mm}<{\centering}|p{8mm}<{\centering}|p{8mm}<{\centering}|p{8mm}<{\centering}|p{8mm}<{\centering}|p{8mm}<{\centering}|p{8mm}<{\centering}|p{8mm}<{\centering}|p{8mm}<{\centering}|p{8mm}<{\centering}|p{8mm}<{\centering}|}
        \hline
        state&
        \multicolumn{4}{c|}{1}&\multicolumn{4}{c|}{2}&\multicolumn{4}{c|}{3}\\
        \hline
        action & \tabincell{c}{$(a_{11},$ \\ $b_{11})$}& \tabincell{c}{$(a_{11},$ \\ $b_{12})$} & \tabincell{c}{$(a_{12},$ \\ $b_{11})$}&\tabincell{c}{$(a_{12},$ \\ $b_{12})$}&\tabincell{c}{$(a_{21},$ \\ $b_{21})$}&\tabincell{c}{$(a_{21},$ \\ $b_{22})$}&\tabincell{c}{$(a_{22},$ \\ $b_{21})$}&\tabincell{c}{$(a_{22},$ \\ $b_{22})$}&\tabincell{c}{$(a_{31},$ \\ $b_{31})$}&\tabincell{c}{$(a_{31},$ \\ $b_{32})$}&\tabincell{c}{$(a_{32},$ \\ $b_{31})$}&\tabincell{c}{$(a_{32},$ \\ $b_{32})$}\\
        \hline
        $\alpha(x,a,b)$&0.98&0.96&0.92&0.9&0.78&0.76&0.73&0.7&0.86&0.84&0.89&0.82\\ \hline
        $r(x,a,b)$&40&24&18&33&12&8&10&17&3&5&2&6\\ \hline
        $\beta(x,a,b)$&20&30&11&13&7&8&6.5&4&0.34&0.44&0.55&0.15\\ \hline
        $p(1|x,a,b)$&0&0&0&0&0.46&0.48&0.39&0.3&0.45&0.24&0.43&0.4\\ \hline
        $p(2|x,a,b)$&0.5&0.43&0.32&0.62&0&0&0&0&0.55&0.76&0.57&0.6\\ \hline
        $p(3|x,a,b)$&0.5&0.57&0.68&0.38&0.54&0.52&0.61&0.7&0&0&0&0\\ \hline
    \end{tabular}
\end{table}

Under these data, we can verify that
Assumptions~\ref{ass1}-\ref{ass4} hold by using Proposition~\ref{pro1}. Thus, the existence of the
value function  and Nash equilibrium of the SMG are ensured by
Theorem~\ref{thm1}. Moreover, by Assumption~\ref{assE} and proposition~\ref{pro1}, we can choose $k_{1}=100,k_{2}=0.1,\alpha_{0}=0.25,\delta=0.1$, from which we obtain $\theta=\min\{0.9k_{2},\frac{\ln10}{k_{1}}\}=0.023,\gamma=1-\delta+\delta e^{-\alpha_{0}\theta}=0.9994,\eta\gamma=\frac{1+\gamma}{2}=0.9997$.
Next, we use Algorithm~\ref{algo1} to find the
value function and a pair of optimal stationary strategies of the
game. The detailed steps are listed as follows.

\text {Step} 1: Initialization.

Let $n=0$, and $V_{0}(1)=V_{0}(2)=V_{0}(3)=1$; set a small threshold
$\epsilon:=10^{-4}$, and we have $\varepsilon=\frac{\epsilon}{1-\eta\gamma}=0.33$.

\text {Step} 2: Iteration.

For $n \geq 0$, $(a,b)\in A(i) \times B(i)$, we have
\begin{equation*}
u_{n}(i, a, b)=\frac {r(i,a,b)}{\alpha (i,a,b)+\beta (i,a,b)}+\frac
{\beta (i,a,b)}{\alpha (i,a,b)+\beta (i,a,b)}\sum
\limits_{j=1}^{3}p(j|i,a,b)V_{n}(j),\quad i=1,2, \nonumber
\end{equation*}
\begin{equation*}
\begin{aligned}
u_{n}(3, a, b)&=\frac {r(3,a,b)}{(\alpha (3,a,b))^{2}\beta (3,a,b)}\Big[\alpha (3,a,b)\beta (3,a,b)-1+e^{-\alpha (3,a,b)\beta (3,a,b)}\Big]\\
&+\frac {1-e^{-\alpha (3,a,b)\beta (3,a,b)}}{\alpha (3,a,b)\beta (3,a,b)}\sum \limits_{j=1}^{3}p(j|3,a,b)V_{n}(j).
\end{aligned}\nonumber
\end{equation*}
Then, for each state $i\in\{1,2,3\}$, we solve the linear program
\begin{equation}\label{equ15}
\left\{
\begin{array}{lr}
\max\limits _{f\left(i,a_{i1}\right), f\left(i,a_{i2}\right),v}~v &  \\
\mbox{subject to } \\
v \leq u_{n}\left(i, a_{i1}, b_{i1}\right) f\left(i,a_{i1}\right)+u_{n}\left(i, a_{i2}, b_{i1}\right) f\left(i,a_{i2}\right) \\
v \leq u_{n}\left(i,a_{i1}, b_{i2}\right) f\left(i,a_{i1}\right)+u_{n}\left(i, a_{i2}, b_{i2}\right) f\left(i,a_{i2}\right) \\
f\left(i,a_{i1}\right)+f\left(i,a_{i2}\right)=1 \\
f\left(i,a_{i1}\right) \geq 0, f\left(i,a_{i2}\right) \geq 0 ,
\end{array}
\right.
\end{equation}
with the solution denoted by $\pi_{n}^{1}(\cdot | i)$ where
$\pi_{n}^{1}(a_{i1}| i)=f(i,a_{i1})$, $\pi_{n}^{1}(a_{i2}|
i)=f(i,a_{i2})$.

Also we solve the dual program of (\ref{equ15})
\begin{equation*}
\left\{
\begin{array}{lr}
\min\limits _{g\left(i,b_{i1}\right), g\left(i,b_{i2}\right),z}~~z \\
\mbox{subject to } \\
z \geq u_{n}\left(i, a_{i1}, b_{i1}\right) g\left(i,b_{i1}\right)+u_{n}\left(i, a_{i1}, b_{i2}\right) g\left(i,b_{i2}\right) \\
z \geq u_{n}\left(i,a_{i2}, b_{i1}\right) g\left(i,b_{i1}\right)+u_{n}\left(i, a_{i2}, b_{i2}\right) g\left(i,b_{i2}\right) \\
g\left(i,b_{i1}\right)+g\left(i,b_{i2}\right)=1 \\
g\left(i,b_{i1}\right) \geq 0, g\left(i,b_{i2}\right) \geq 0 ,
\end{array}
\right.
\end{equation*}
with the solution denoted by $\pi_{n}^{2}(\cdot | i)$ where
$\pi_{n}^{2}(b_{i1}| i)=g(i,b_{i1})$, $\pi_{n}^{2}(b_{i2}|
i)=g(i,b_{i2})$. We set
$$V_{n+1}(i)=\sum\limits_{a\in A(i),b\in B(i)}u_{n}(i, a, b)\pi_{n}^{1}(a| i)\pi_{n}^{2}(b| i).$$

\text {Step} 3: Termination judgement.

If $\max\limits_{i=1,2,3}|V_{n+1}(i)-V_{n}(i)|<\epsilon$, then
the iteration stops, $V_{n}$ is the $\varepsilon$-value function and
$(\pi_{n}^{1}(\cdot | i),\pi_{n}^{2}(\cdot | i))$ is
$\varepsilon$-Nash equilibrium of the SMG; Otherwise, set $n = n+1$
and go to Step~2.

We use Matlab to implement the iteration algorithm for this example.
It takes about $10$ seconds to stop at the $93$rd iteration. The
curves of the error of two successive iterations, the value
function, and the strategy pair of players with respect to the
iteration times are illustrated by Figures~\ref{fig1}-\ref{fig3}.

\begin{figure}[H]
    \begin{minipage}[t]{0.5\textwidth}
        \centering
        \includegraphics[scale=0.5]{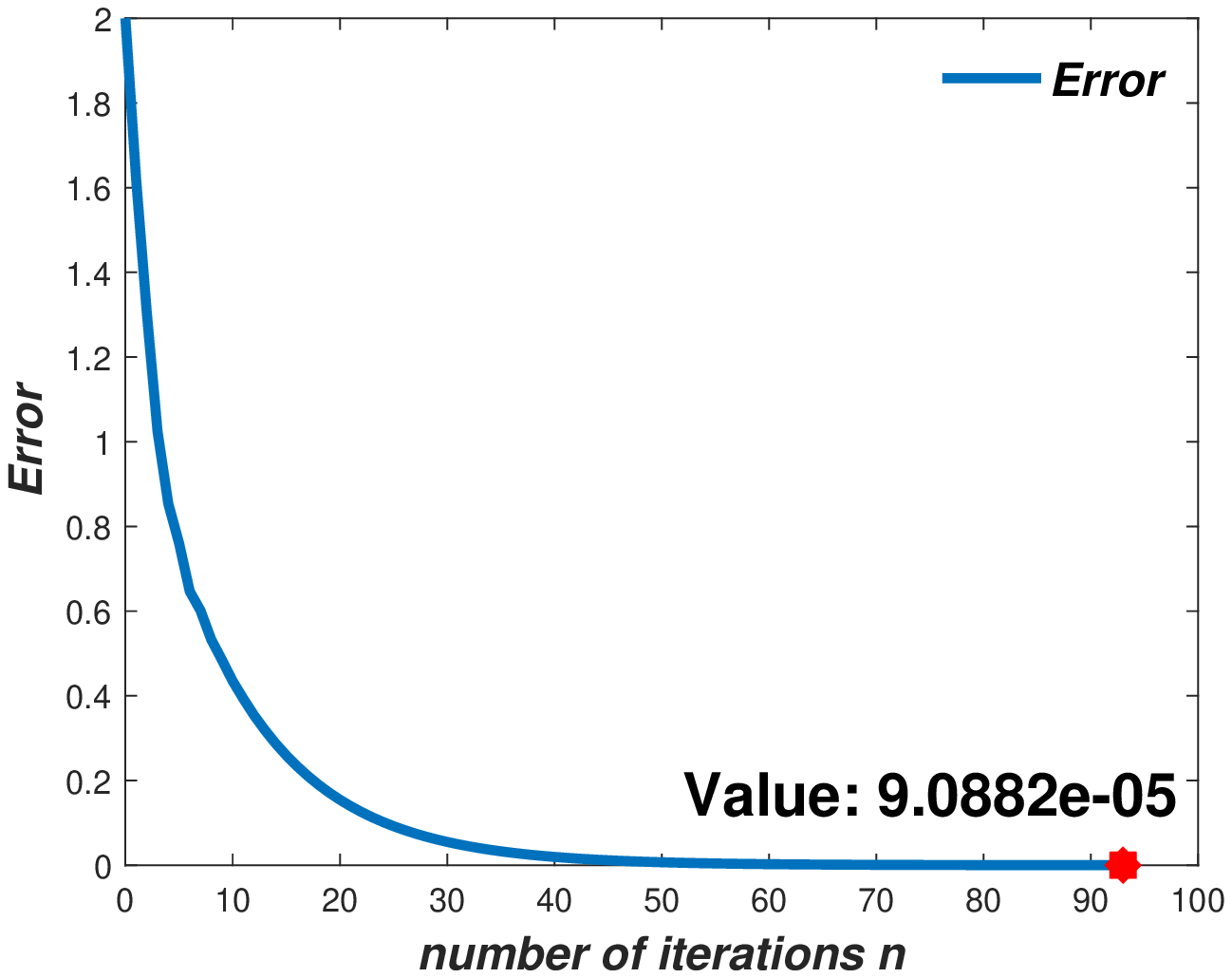}
        \caption{The error}\label{fig1}
    \end{minipage}%
    \hfill
    \begin{minipage}[t]{0.5\textwidth}
        \centering
        \includegraphics[scale=0.5]{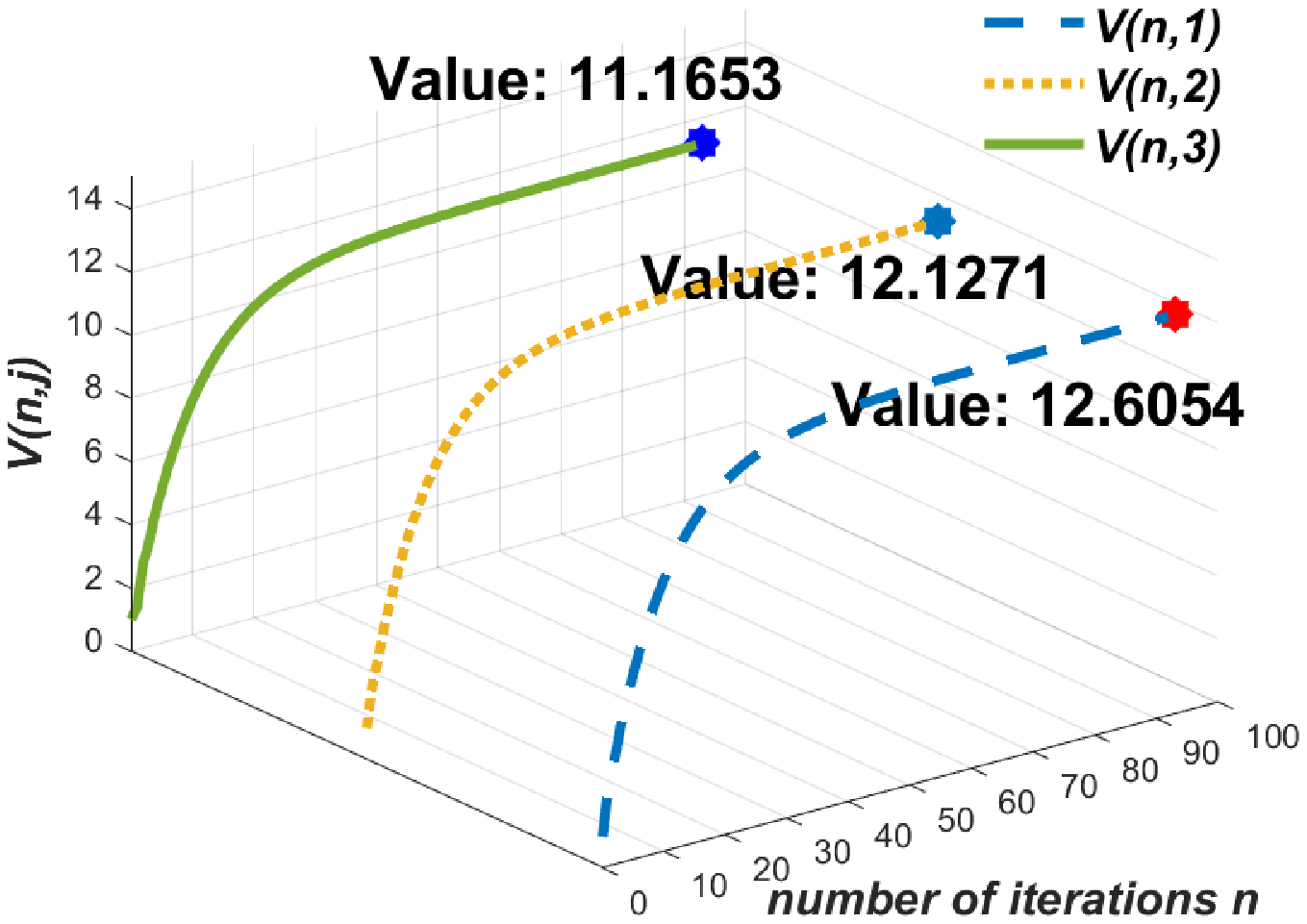}
        \caption{The value function of the game $V$}\label{fig2}
    \end{minipage}
\end{figure}
\begin{figure}[!htbp]
    \centering
    \subfigure{}{
        \begin{minipage}{5.2cm}
            \centering
            \includegraphics[scale=0.4]{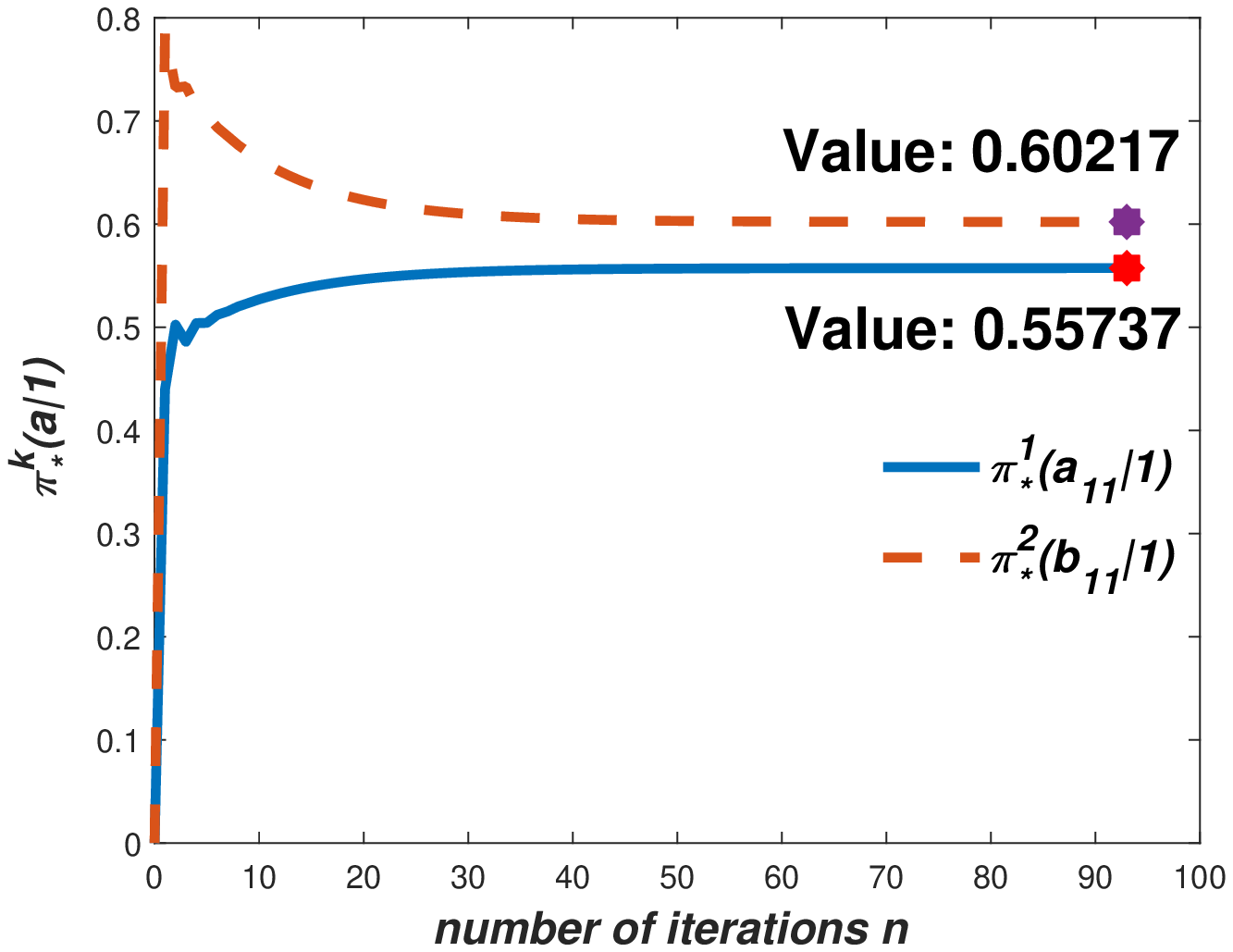}
        \end{minipage}
    }
    \subfigure{}{
        \begin{minipage}{5.2cm}
            \centering
            \includegraphics[scale=0.4]{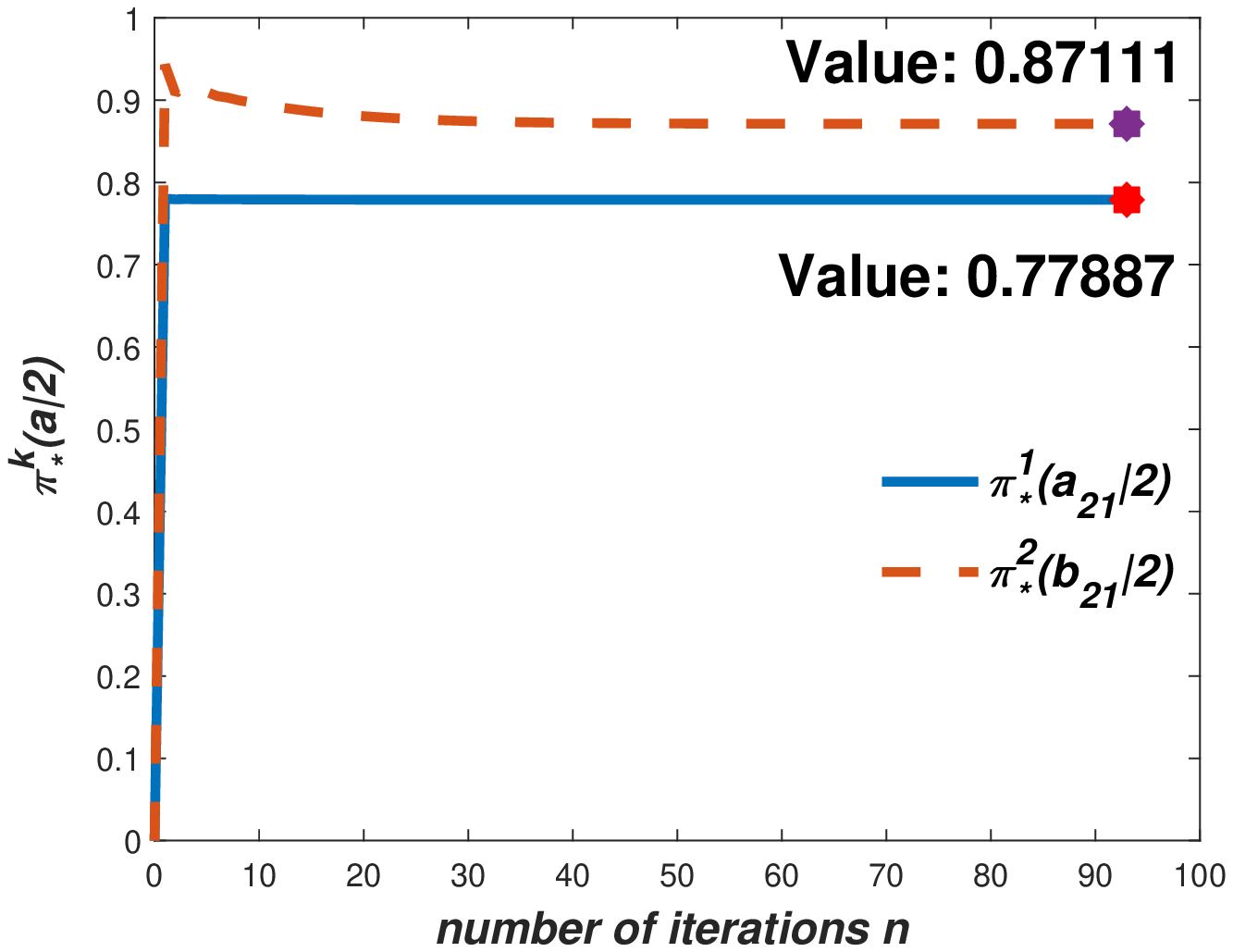}
        \end{minipage}
    }
    \subfigure{}{
        \begin{minipage}{5.2cm}
            \centering
            \includegraphics[scale=0.4]{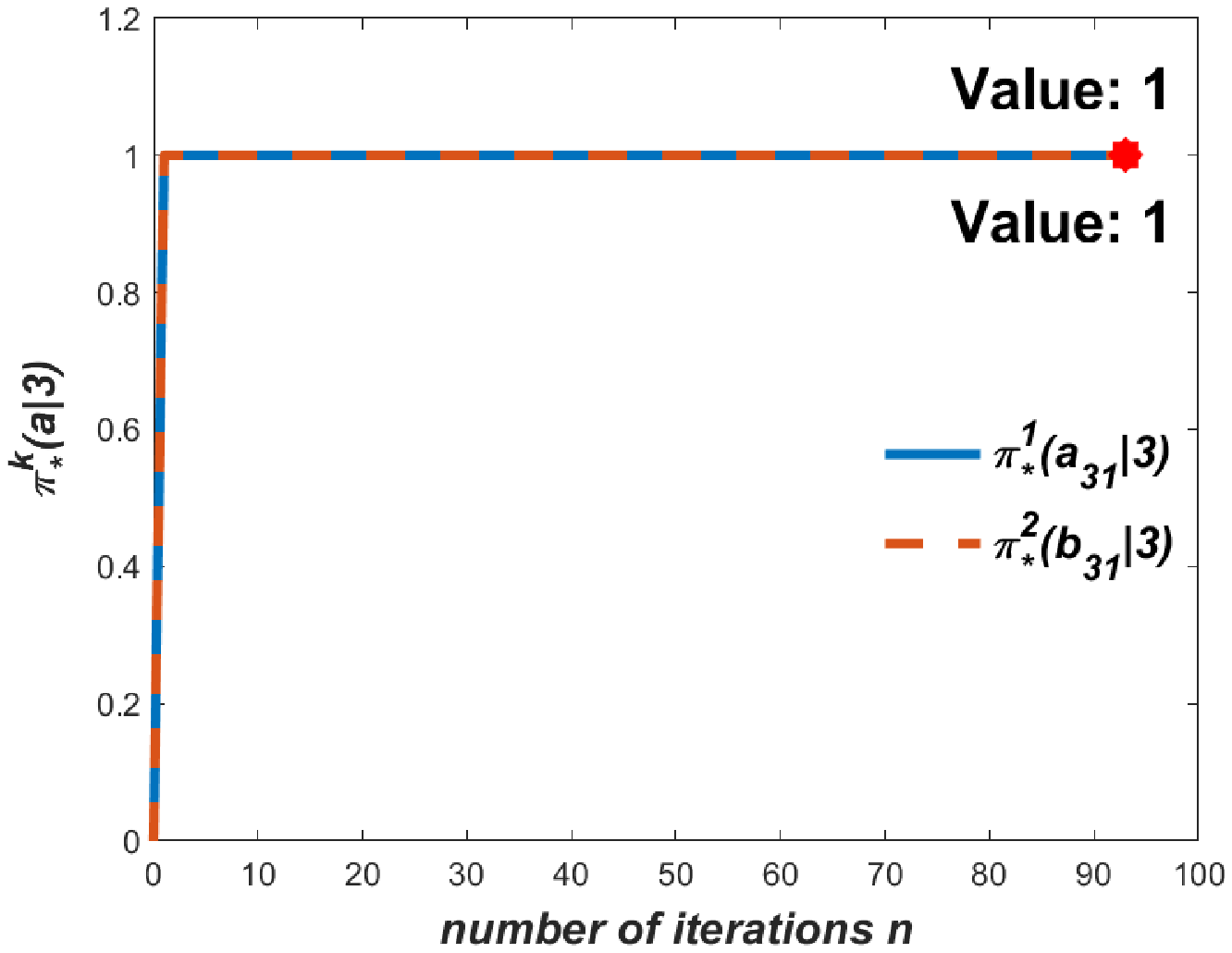}
        \end{minipage}
    }
    \caption{The optimal strategy pair $(\pi_{*}^1,\pi_{*}^2)$}
    \label{fig3}
\end{figure}

Based on the experimental results, we have the following
observations:

$1.$ When the state is benefit, the investor should take action $a_{11}$ with probability $0.60217$ and $a_{12}$ with probability $0.39783$, while the market-maker should take action $b_{11}$ with probability $0.55737$ and $b_{12}$ with probability $0.44263$;

$2.$ When the state is medium, the investor should take action $a_{21}$ with probability $0.87111$ and $a_{22}$ with probability $0.12889$, while the market-maker should take action $b_{21}$ with probability $0.77887$ and $b_{22}$ with probability $0.22113$;

$3.$ When the state is loss, the investor should always take action $a_{31}$ while the market-maker should always take action $b_{31}$;

$4.$ If both investor and market-maker use the optimal
strategies, the investor will obtain a profit $12.6054$ at benefit
state, $12.1271$ at medium state and $11.1653$ at loss state, while the
market-maker will lose the same amount, respectively.
\begin{remark}
    In this example, we choose a uniformly distributed sojourn time at
    state $3$ to show that arbitrary distributions are permitted for the
    sojourn time of semi-Markov processes. Other distributions can also
    be chosen for the sojourn time according to practical situations.
    Moreover, if all the sojourn times are exponentially distributed,
    the semi-Markov games degenerate into discrete-time Markov games.
\end{remark}

\section{Conclusion}\label{sec6}
In this paper, we concentrate on the two-person zero-sum SMG with expected
discounted payoff criterion in which the discount factors are
state-action-dependent. We first construct the SMG model with a fairly
general definition setting. Then we impose suitable conditions on
the model parameters, under which we establish the Shapley equation
whose unique solution is the value function and prove the existence
of a pair of optimal stationary strategies of the game. While the
state and action spaces are finite, a value iteration-type algorithm
for approaching to the value function and Nash Equilibrium is
developed. Finally, we apply our results to an investment problem, which demonstrates that our algorithm performs well.

One of the future research topics is to deal with the nonzero-sum
case of this game model. We wish to find sufficient conditions under
which we use the similar arguments to establish the Shapley equation
and prove the existence of a pair of optimal stationary strategies
for such game. In addition to the value iteration algorithm, the
policy iteration algorithm is also widely used to solve MDPs.
Therefore, it is also promising to develop a policy iteration-type
algorithm to solve the two-person zero-sum SMGs. Moreover,
considering the limitations of computing resources, the dynamic
programming algorithm is difficult to implement in reality when the
scale of the game becomes huge. Another future research topic is to
develop data-driven learning algorithms to approximately solve the
game problems, such as the combination with multi-agent
reinforcement learning approaches.

\section*{Acknowledgements}
This work was supported in part by the National Natural Science
Foundation of China (11931018, 61573206).

\end{document}
